\newcommand*\patchAmsMathEnvironmentForLineno[1]{
  \expandafter\let\csname old#1\expandafter\endcsname\csname #1\endcsname
  \expandafter\let\csname oldend#1\expandafter\endcsname\csname end#1\endcsname
  \renewenvironment{#1}
     {\linenomath\csname old#1\endcsname}
     {\csname oldend#1\endcsname\endlinenomath}}
\newcommand*\patchBothAmsMathEnvironmentsForLineno[1]{
  \patchAmsMathEnvironmentForLineno{#1}
  \patchAmsMathEnvironmentForLineno{#1*}}
\newtheorem{theorem}{Theorem}
\newtheorem{lemma}{Lemma}
\newtheorem{observation}{Observation}
\newtheorem{definition}{Definition}
\newcommand{\dist}{{d}}
\title{Computing Diverse Shortest Paths Efficiently:\\ A Theoretical and Experimental Study}
\author{
     Tesshu Hanaka\textsuperscript{\rm 1}, Yasuaki Kobayashi\textsuperscript{\rm 2}, Kazuhiro Kurita\textsuperscript{\rm 3}, See Woo Lee\textsuperscript{\rm 2}, Yota Otachi\textsuperscript{\rm 1}\\
}
\begin{document}

\maketitle

\begin{abstract}
    Finding diverse solutions in combinatorial problems recently has received considerable attention~\cite{Baste:Diversity:2020,Fomin:Diverse:2020,Hanaka:Finding:2021}. 
    In this paper we study the following type of problems: given an integer $k$, the problem asks for $k$ solutions such that the sum of pairwise (weighted) Hamming distances between these solutions is maximized.
    Such solutions are called \emph{diverse solutions}.
    We present a polynomial-time algorithm for finding diverse shortest $st$-paths in weighted directed graphs. 
    Moreover, we study the diverse version of other classical combinatorial problems such as diverse weighted matroid bases, diverse weighted arborescences, and diverse bipartite matchings.
    We show that these problems can be solved in polynomial time as well.
    To evaluate the practical performance of our algorithm for finding diverse shortest $st$-paths, we conduct a computational experiment with synthetic and real-world instances.
    The experiment shows that our algorithm successfully computes diverse solutions within reasonable computational time.
\end{abstract}

\section{Introduction}\label{sec:intro}

When solving a real-world problem, we usually interpret it as an instance of a mathematical optimization problem. We then seek a \emph{single} best solution that optimizes a given objective function while satisfying the set of constraints. 
Unfortunately, the \emph{single} solution obtained this way may be inadequate for practical usage, since real-life intricacies (side constraints) are oversimplified or even ignored in the optimization problem for the sake of computational feasibility.

To illustrate, consider the task of finding a fastest route between location $s$ and $t$, which may seem like an obvious instance of the shortest $st$-path problem in graphs.
However, numerous important real-life considerations -- such as traffic congestion, constructions, or availabilities of petrol stations in the route, to name a few -- cannot be factored into the graph data structure on which Dijkstra's or Bellman-Ford's algorithms are run.
Furthermore, real-life constraints are often ``...aesthetic, political, or environmental'' in nature~\cite{Baste:Diversity:2020}; such qualitative characteristics are difficult or downright impossible to express mathematically. 

In light of this issue, the \emph{multiplicity} of the solutions has gained attention. One of the best known approaches to achieve the solution multiplicity is the \emph{$k$-best enumeration}~\cite{Hamacher:K-best:1985,Eppstein:k-best:2016} (also known as \emph{top-$k$ enumeration}). The $k$-best enumeration version of an algorithm finds $k$ feasible solutions $\mathcal S = \{S_1, S_2, \ldots, S_k\}$ such that no feasible solution outside $\mathcal S$ is strictly better than that inside $\mathcal S$. This approach allows us to choose the best-suited solution amongst $k$ \emph{sufficiently good} feasible solutions, in a given real-life situation.
There are many $k$-best enumeration algorithms for various optimization problems.
See ~\cite{Hamacher:K-best:1985,Eppstein:k-best:2016} for surveys.

One potential drawback of the $k$-best enumeration approach is the lack of {\em diversity} of solutions.
Most of $k$-best enumeration algorithms, such as Lawler's framework~\cite{Lawler:procedure:1972}, recursively generate solutions from a single optimal solution $X = \{x_1, x_2, \ldots, x_t\}$ by finding a solution including $\{x_1, \ldots, x_{i-1}\}$ and excluding $x_i$ for each $1 \le i \le t$. This implies that solutions tend to be similar to each other in nature~\cite{Akgun:finding:2000}, which may potentially defeat the purpose of finding multiple solutions. 

Motivated by this, (explicitly) optimizing diversity of solutions has received considerable attention in the literature.
There are many results for finding diverse solutions for Constrained Satisfaction Problems or Mixed Integer Programming~\cite{Danna:How:2009,Hebrard:Finding:2005,Nadel:Generating:2011,Petit:Finding:2015,Petit:Enriching:2019} and diversifying query results~\cite{Drosou:Search:2010,Liu:Finding:2018,Qin:Diversifying:2012,Vieira:query:2011}.
According to \citeauthor{Baste:FPT:2019} \citeyearpar{Baste:FPT:2019}, Michael Fellows proposed \emph{the Diverse X Paradigm}, where \emph{X} is a placeholder for an optimization problem, and to investigate these problems from the theoretical perspective, particularly fixed-parameter tractability.
Based on this proposal, they initiated a theoretical study of the Diverse X Paradigm and gave several results through the lens of fixed-parameter tractability.
In particular, they showed that several diverse versions of combinatorial problems, such as {\sc Vertex Cover}, {\sc Feedback Vertex Set}, and {\sc $d$-Hitting Set}, are fixed-parameter tractable parameterized by the solution size plus the number of solutions~\cite{Baste:FPT:2019}.
Their follow-up research also discussed the fixed-parameter tractability of diverse versions of several combinatorial problems on bounded-treewidth graphs~\cite{Baste:Diversity:2020}. 

\paragraph{Diversity measures:} Before describing our results, we need to define known diversity measures and discuss known results relevant to our results.
There are mainly two diversity measures in these theoretical studies.
Let $U$ be a finite set.
Let $S_1, \ldots, S_k$ be (not necessarily disjoint) subsets of $U$.
We define
\begin{align*}
    \dist_{\rm sum}(S_1, \ldots, S_k) = \sum_{1 \le i < j \le k} |S_i \triangle S_j|,
\end{align*}
where $\triangle$ is the symmetric difference of two sets (we may call it the Hamming distance between two sets).
We also define
\begin{align*}
    \dist_{\min}(S_1, \ldots, S_k) = \min_{1 \le i < j \le k} |S_i \triangle S_j|.
\end{align*}

\paragraph{Related work:} 
There are several recent work closely related to our results.
Fomin et al.~\citeyearpar{Fomin:Diverse:2020} showed that the problem of finding two maximum matchings $M_1, M_2$ maximizing its symmetric difference $|M_1 \triangle M_2|$ in bipartite graphs can be solved in polynomial time, whereas it is NP-hard on general graphs~\cite{Fomin:Diverse:2020,Holyer:NP-completeness:1981,Hanaka:Finding:2021}.
To overcome this intractability, they devised an FPT-algorithm with respect to parameter $|M_1 \triangle M_2|$ for general graphs, that is, it runs in time $f(d) n^{O(1)}$, where $n$ is the number of vertices of an input graph and $d = |M_1 \triangle M_2|$.
Moreover, Fomin et al.~\citeyearpar{Fomin:Diverse:2021} studied the problems of finding $k$ solutions $S_1, S_2, \ldots, S_k$ for several combinatorial problems related to (linear) matroids and matchings such that the weighted version of $\dist_{\min}(S_1, S_2, \ldots, S_k)$ is at least $d$.
They showed that these problems are fixed-parameter tractable parameterized by $k + d$ (i.e., the running time of these algorithms is $f(k,d)n^{O(1)}$, where $f$ is some computable function and $n$ is the input size).
On the negative side, they showed that finding $k$ bases of a matroid maximizing the weighted version of $\dist_{\min}$ is NP-hard even on uniform matroids. 
Contrary to this hardness result, Hanaka et al.~\citeyearpar{Hanaka:Finding:2021} showed that finding $k$ bases of a matroid maximizing $\dist_{\rm sum}$ is solvable in polynomial time.\footnote{In their problem setting, the independence oracle of matroids is given as input and is assumed to be evaluated in polynomial time.}
They also proposed a general framework to obtain diverse solutions with running time exponentially depending on $k$ and the cardinality of solutions.

Finding diverse short(est) paths is a fruitful research area and a considerable amount of effort has been dedicated to it so far~\cite{Akgun:finding:2000,Chondrogiannis:Finding:2020,Liu:Finding:2018,Voss:heuristic:2015}.
Alg\"un, Erkut, and Batta \citeyearpar{Akgun:finding:2000} experimentally compared some existing methods and their method based on the dispersion problem~\cite{Kuby:dispersion:1987}.
Chondrogiannis et al. \citeyearpar{Chondrogiannis:Finding:2020} studied the problem of finding a set of $k$ paths $P_1, P_2, \ldots, P_k$ between two vertices $s$ and $t$ such that these paths are ``dissimilar'' to each other and for every path $P$ between $s$ and $t$ with $P \neq P_i$ for any $1 \le i \le k$, either there is a path $P_i$ that is similar to $P$ or the length of $P$ is not shorter than that of $P_i$ for any $1 \le i \le k$.
They showed that this problem is NP-hard and gave an exponential-time algorithm for this problem.
By \cite{Liu:Finding:2018}, a similar problem was discussed and a heuristic algorithm was given.
Voss, Moll, and Kavraki \citeyearpar{Voss:heuristic:2015} gave a heuristic algorithm that optimizes some diversity measure of paths based on the Fr\'echet distance.
To the best of our knowledge, no shortest $st$-path algorithms that run in polynomial time with a theoretical guarantee of diversity are known in the literature.

\paragraph{Our results:} In this paper, we expand the tractability border of the diverse version of classical combinatorial optimization problems.
Let $\mathcal S \subseteq 2^U$ be a set of solutions.
The common goal of our problems is to find a set of $k$ solutions $S_1, S_2, \ldots, S_k \in \mathcal S$ maximizing $\dist_w(S_1, \ldots. S_k)$, where $\dist_w$ is the weighted version of the diversity measure $d_{\rm sum}$ (see the next section for the definition).
We show that if $\mathcal S$ consists of either (1) the shortest $st$-paths in a directed graph, (2) the bases of a matroid, (3) the arborescences of a directed graphs, or (4) the bipartite matchings of size $p$, then the problem can be solved in polynomial time.
The algorithm for (2) is an extension of the algorithm of \cite{Hanaka:Finding:2021} to the weighted counterpart and that for (4) is an extension of \cite{Fomin:Diverse:2020} that allows to find more than two diverse bipartite matchings in polynomial time.
Contrary to the tractability of (1), we show that the problem of finding two disjoint ``short'' paths is NP-hard.

As will be elaborated on \Cref{lem:w-packing}, this positive outcome is achieved by understanding our problem as a generalization of packing problems.
In the packing problem (with respect to $\mathcal S$), we are asked to find \emph{mutually disjoint} sets $S_1, S_2, \ldots, S_k$ in $\mathcal S$.
Suppose that each solution in $\mathcal S$ has the same cardinality.
Hanaka et al. \citeyearpar{Hanaka:Finding:2021} showed that by appropriately duplicating and weighting each (duplicated) element in $U$, the problem of maximizing $\dist_{\rm sum}(S_1, S_2, \ldots, S_k)$ can be reduced to a certain weighted packing problem. 
Our lemma formalizes their idea and states that finding diverse solutions can be reduced to a certain packing problem as well even if solutions are weighted and have different cardinalities.
As an application of this lemma, we show that finding diverse shortest $st$-paths can be reduced to the minimum cost flow problem, which can be solved in polynomial time.

We conducted a computational experiment to measure the practical performance of our algorithm for the diverse version of the shortest path problem.
Our experiment shows that the proposed algorithm runs in reasonable time and computes more diverse shortest paths than those computed by the $k$-best enumeration algorithm. 

\section{Preliminaries}\label{sec:prel}

Let $D = (V, E)$ be a (directed) graph.
We denote by $V(D)$ and $E(D)$ the sets of vertices and edges of $D$, respectively.

Let $U$ be a finite set and let $w: U \to \mathbb Z$.
Let $S_1, \ldots, S_k$ be (not necessarily disjoint) subsets of $U$.
We define
\begin{align*}
    \dist_w(S_1, \ldots, S_k) = \sum_{1 \le i < j \le k} w(S_i \triangle S_j),
\end{align*}
where $w(X) = \sum_{x \in X} w(x)$.
This notation extends the diversity measure $\dist_{\rm sum}$ defined in the introduction, where $\dist_{\rm sum}(S_1, \ldots, S_k) = \dist_w(S_1, \ldots, S_k)$ with $w(x) = 1$ for all $x \in U$.
Let $k$ be a positive integer and let $\mathcal S \subseteq 2^U$ be a family of subsets of $U$.
We expand each element $e$ in $U$ into $k$ copies: Let $U^* = \{e_1, \ldots, e_k : e \in U\}$.
We define a function $f: U^* \to U$ such that $f(e_i) = e$ for all $e_i \in U^*$.
We say that $S^* \subseteq U^*$ is a {\em $k$-packing of $U^*$ with respect to $\mathcal S$} if $S^*$ can be partitioned into $S_1, \ldots, S_k$ such that $\{f(e^*) : e^* \in S_i\} \in \mathcal S$ for all $1 \le i \le k$.

We consider a weight function $w^*: U^* \to \mathbb Z$ such that $w^*(e_i) = w(e) \cdot (k - 2i + 1)$ for $e \in U$ and $1 \le i \le k$.
Then, we have the following lemma.

\begin{lemma}\label{lem:w-packing}
    There are $S_1, S_2, \ldots, S_k$ in $\mathcal S$ with $\dist_w(S_1, \ldots, S_k) \ge t$ if and only if there is a $k$-packing $S^*$ of $U^*$ with respect to $\mathcal S$ such that  $w^*(S^*) \ge t$.
\end{lemma}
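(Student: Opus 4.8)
The plan is to reduce the statement to two elementary identities. For subsets $S_1,\dots,S_k\subseteq U$ and $e\in U$, put $n_e=|\{i:e\in S_i\}|$; since $e$ contributes $w(e)$ to $w(S_i\triangle S_j)$ exactly for the $n_e(k-n_e)$ pairs $i<j$ in which $e$ lies in precisely one of $S_i,S_j$, we get the \emph{diversity identity}
\[
\dist_w(S_1,\dots,S_k)=\sum_{e\in U}w(e)\,n_e(k-n_e).
\]
For the weights of the copies, a direct summation gives the \emph{copy-weight identity} $\sum_{j=1}^{n}w^*(e_j)=\sum_{j=1}^{n}w(e)(k-2j+1)=w(e)\,n(k-n)$; moreover, since $j\mapsto w(e)(k-2j+1)$ is non-increasing, for every $J\subseteq\{1,\dots,k\}$ we have $\sum_{j\in J}w^*(e_j)\le w(e)\,|J|(k-|J|)$, with equality when $J=\{1,\dots,|J|\}$.

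For the forward direction I would construct the packing explicitly. Given $S_1,\dots,S_k\in\mathcal S$ with $\dist_w(S_1,\dots,S_k)\ge t$, for each $e$ let $i_1<\dots<i_{n_e}$ enumerate the indices with $e\in S_i$ and place the copy $e_\ell$ into block $S^*_{i_\ell}$ for $1\le\ell\le n_e$. The blocks $S^*_1,\dots,S^*_k$ are then pairwise disjoint, their union $S^*$ contains exactly the copies $e_1,\dots,e_{n_e}$ of each $e$, and $f(S^*_i)=S_i\in\mathcal S$, so $S^*$ is a $k$-packing; by the two identities, $w^*(S^*)=\sum_e\sum_{\ell=1}^{n_e}w^*(e_\ell)=\sum_e w(e)\,n_e(k-n_e)=\dist_w(S_1,\dots,S_k)\ge t$.

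For the converse, start from a $k$-packing $S^*$ with partition $S^*_1,\dots,S^*_k$ and $w^*(S^*)\ge t$, and first arrange that no block contains two copies of the same element. Once this normalization holds, set $S_i:=f(S^*_i)\in\mathcal S$; writing $I_e=\{j:e_j\in S^*\}$, the fact that every block carries at most one copy of $e$ means that $e$ belongs to exactly $|I_e|$ of the $S_i$, so by the two identities $\dist_w(S_1,\dots,S_k)=\sum_e w(e)\,|I_e|(k-|I_e|)\ge\sum_e\sum_{j\in I_e}w^*(e_j)=w^*(S^*)\ge t$, as required.

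The step I expect to be the real obstacle is the normalization in the converse: deleting a duplicated copy $e_b$ (the larger-indexed one in a block that also holds $e_a$ with $a<b$) keeps the block's $f$-image in $\mathcal S$ but can lower $w^*$ when $w^*(e_b)>0$, so a more careful exchange is needed. My plan is to take, among all $k$-packings of weight at least $t$, one minimizing a suitable potential — for instance $\sum_{j:\,e_j\in S^*}j$, with ties broken by the total number of duplicated copies — and to deduce from minimality that any duplicated $e_b$ can always be moved to a block already meeting $e$, or to an $e$-free block whose $f$-image stays in $\mathcal S$ after inserting $e$, or deleted, in each case without decreasing $w^*$ below $t$ or leaving $\mathcal S$. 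After finitely many such moves $S^*$ is normalized and the argument above applies; everything else is bookkeeping with the two displayed identities.
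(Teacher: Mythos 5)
Your forward direction is correct and is essentially the paper's own argument, made slightly more explicit: the paper likewise writes $\dist_w(S_1,\dots,S_k)=\sum_{e}w(e)m(e)(k-m(e))=\sum_e\sum_{i\le m(e)}w^*(e_i)$ and distributes the top $m(e)$ copies among the blocks that contain $e$. The trouble is the converse, and it is precisely the step you yourself flag as the obstacle: the normalization that no block of the packing carries two copies of the same element. You leave this as a plan (minimize a potential, then move or delete duplicated copies), but no such exchange argument can succeed at this level of generality, because the normalization claim---and with it the converse as literally stated for arbitrary $\mathcal S$---is false. Take $k=4$, $U=\{x_1,x_2,x_3,x_4\}$, $w\equiv 1$, and $\mathcal S=\{\{x_1\},\{x_2\},\{x_3\},\{x_4\}\}$. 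Let $S^*$ consist of the two highest-weight copies of each $x_i$ (weights $3$ and $1$), partitioned into blocks $S^*_i$ holding both copies of $x_i$. Each block's image is a singleton of $\mathcal S$, so this is a legitimate $4$-packing with $w^*(S^*)=4\cdot(3+1)=16$, yet any four solutions are singletons and $\dist_w(S_1,\dots,S_4)\le 2\binom{4}{2}=12$. Here the duplicated copy has positive weight, no other block meets $x_1$, and inserting $x_1$ into any other block leaves $\mathcal S$, so none of your three proposed moves is available; no choice of potential can rescue the plan, since the statement being targeted fails.

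What closes the gap is not an exchange argument but an extra hypothesis that the paper uses silently: its converse computation identifies the multiplicity $m(e)$ of $e$ in $S^*$ with the number of blocks whose image contains $e$, i.e., it assumes each block holds at most one copy of each element. That condition should be built into the definition of the $k$-packing (or verified for the concrete $\mathcal S$ at hand); it holds automatically in every application in the paper, since a simple $st$-path, a matching, or an arborescence uses at most one of the $k$ parallel copies of an edge, and the matroid $\mathcal M^*$ is defined so that independent sets contain at most one copy of each element. Under that assumption your inequality $\sum_{j\in J}w^*(e_j)\le w(e)\,|J|(k-|J|)$ finishes the converse immediately, and in fact treats non-consecutive copies more carefully than the paper's ``assume the copies are consecutive'' remark. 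One further caveat applies to both you and the paper: the monotonicity of $j\mapsto w(e)(k-2j+1)$ (and hence that inequality) requires $w(e)\ge 0$, although $w$ is declared to map into $\mathbb Z$; you should state nonnegativity explicitly or treat negative weights separately.
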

\begin{proof}
    Suppose that there are $S_1, S_2, \ldots, S_k$ in $\mathcal S$ with $\dist_w(S_1, \ldots, S_k) \ge t$.
    For each $e \in U$, we denote by $m(e)$ the number of occurrences of $e$ in the collection $\{S_1, \ldots, S_k\}$.
    Then, we have
    \begin{align*}
        \dist_w(S_1, \ldots, S_k)
        &= \sum_{e \in U}(w(e) \cdot m(e) \cdot (k - m(e)))\\
        &= \sum_{e \in U} (w(e) \cdot \sum_{1 \le i \le m(e)} (k - 2i + 1))\\
        & = \sum_{e \in U} \sum_{1 \le i \le m(e)} w^*(e_i)\\
        & = w^*(S^*).
    \end{align*}
    
    Conversely, assume that there is a $k$-packing $S^*$ of $U^*$ with respect to $\mathcal S$ with $w^*(S^*) \ge t$.
    We assume moreover that, for each $e \in U$, $S^*$ contains consecutive elements $e_1, \ldots, e_m$ for some $m$, that is, $\{e_1, \ldots, e_m\} \subseteq S^*$ and $\{e_{m+1} \ldots, e_{k}\} \cap S^* = \emptyset$.
    This assumption is legitimate as $w(e_i) > w(e_j)$ for $1 \le i < j \le k$.
    We denote the multiplicity $m$ of $e$ by $m(e)$. 
    Let $\{S^*_1, \ldots, S^*_k\}$ be a partition of $S^*$ such that $S_i = \{f(e^*) :e^* \in S^*_i\} \in \mathcal S$ for $1 \le i \le k$.
    For each $e \in U$, the contribution of $e$ to $w^*(S^*)$ is indeed $w(e) \cdot \sum_{1 \le i \le m(e)}(k - 2i + 1)$.
    Hence, we have 
    \begin{align*}
        w^*(S^*) &= \sum_{e \in U}\sum_{1 \le i \le m(e)} w^*(e_i)\\
        &= \dist_w(S_1, \ldots, S_k)
    \end{align*}
    as in the ``only-if'' direction.
\end{proof}

Let us note that a similar idea is used in \cite{Hanaka:Finding:2021}, however, their proof relies on the fact that solutions have the same cardinality and the objective function is to maximize the sum of \emph{unweighted} Hamming distances.

For several cases, we reduce the problem of finding a largest weight $k$-packing to the minimum cost flow problem.
In the minimum cost flow problem, given a directed graph $D = (V, E)$, $s, t \in V$, a cost function $w: E \to \mathbb Q$, a capacity function $c: E \to \mathbb N_{\ge 0}$, and an integer $k$, the goal is to compute a flow $f: E \to \mathbb Q$ with flow value $k$ such that (1) $f(e) \le c(e)$ for  all $e \in E$, (2) $\sum_{(v, w) \in E}f((v,w)) = \sum_{(w, v) \in E} f((w, v))$ for all $v \in V \setminus \{s,t\}$, and (3) $\sum_{e \in E} w(e)f(e)$ is minimized, where the flow value is defined as $\sum_{(s, v) \in E} f((s, v))$.
This problem can be solved in polynomial time.
\begin{theorem}[\citeauthor{Orlin:Faster:1993}~\citeyear{Orlin:Faster:1993}]\label{thm:mcf}
    The minimum cost flow problem can be solved in time $O(|E|^2 \log^2 |V|)$.
\end{theorem}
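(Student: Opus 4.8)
The statement is the classical result of Orlin, so in the paper it is invoked rather than reproved; here is the line of argument one would follow to establish it. The plan is to run Orlin's \emph{enhanced capacity-scaling algorithm}. First I would normalize the instance by deleting isolated vertices so that the underlying graph is connected; this ensures $|V| = O(|E|)$ and lets every running-time expression below be simplified against $|E|^2$.

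The backbone is the capacity-scaling scheme of Edmonds and Karp. One maintains node potentials together with a flow satisfying the reduced-cost optimality conditions (no negative residual cycle), plus a scaling parameter $\Delta$ that starts at (roughly) the largest capacity and is halved each phase. Within a $\Delta$-phase, flow is augmented only along paths whose residual capacity is at least $\Delta$; each such augmenting path is a shortest path in the residual graph with respect to the reduced costs, computed by one Dijkstra run with Fibonacci heaps in time $O(|E| + |V| \log |V|)$, after which the potentials are updated so that optimality is preserved. Correctness is the standard invariant argument: at the end of the last phase $\Delta < 1$, so the flow is integral and satisfies the optimality conditions, hence is a minimum-cost flow.

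The non-trivial ingredient — and the step I expect to be the main obstacle — is turning this into a strongly polynomial bound, i.e.\ removing the dependence on the magnitude of the capacities. Orlin's idea is to detect \emph{abundant} arcs, arcs carrying flow that is large compared with the total remaining excess; one proves that the flow on such an arc can be fixed to its current value in some optimal solution, so the arc may be contracted, and a careful amortized/counting argument then shows that over the whole execution only $O(|E| \log |V|)$ augmentations take place, independently of the capacities. Establishing both halves of this — safety of the contraction and the $O(|E| \log |V|)$ bound on the number of augmentations — is the technical heart of the proof; everything else is routine min-cost-flow bookkeeping. Multiplying the $O(|E| \log |V|)$ augmentations by the $O(|E| + |V| \log |V|)$ cost of each shortest-path computation, and using $|V| = O(|E|)$, yields the claimed running time $O(|E|^2 \log^2 |V|)$.
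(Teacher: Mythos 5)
The paper does not prove this statement at all: it is imported verbatim as Orlin's strongly polynomial minimum cost flow result, exactly as you treat it, and your outline of the enhanced capacity-scaling algorithm (with the contraction of abundant arcs giving the $O(|E|\log|V|)$ bound on Dijkstra-based augmentations, each costing $O(|E|+|V|\log|V|)$) is a faithful sketch of Orlin's argument that indeed yields $O(|E|^2\log^2|V|)$. So your proposal is consistent with the paper's approach of simply invoking Orlin's theorem, and the only unproven parts are the ones you explicitly flag as belonging to Orlin's paper itself.
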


\section{Diverse Shortest $st$-Paths}\label{sec:path}

As an application of \Cref{lem:w-packing}, we consider the following problem.

\begin{definition}
    Let $D = (V, E)$ be a directed graph with specified vertices $s, t \in V$.
    Let $\ell : E \to \mathbb N_{> 0}$ be a length function on edges.
    Let $\mathcal P$ be the set of all shortest paths from $s$ to $t$ in $(D, \ell)$.
    Given an integer $k$ and a weight function $w: E \to \mathbb Z$, {\sc Diverse Shortest $st$-Paths} asks for $k$ paths $P_1, \ldots, P_k \in \mathcal P$ such that $\dist_w(E(P_1), \ldots, E(P_k))$ is maximized.
\end{definition}

\begin{theorem}\label{thm:dsp}
    {\sc Diverse Shortest $st$-Paths} can be solved in $O(k^2|E|^2\log^2 |V|)$ time.
\end{theorem}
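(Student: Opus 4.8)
The plan is to reduce {\sc Diverse Shortest $st$-Paths} to a minimum cost flow instance via \Cref{lem:w-packing}, and then invoke \Cref{thm:mcf}. By \Cref{lem:w-packing}, finding $k$ shortest $st$-paths $P_1, \ldots, P_k$ maximizing $\dist_w(E(P_1), \ldots, E(P_k))$ is equivalent to finding a maximum-$w^*$-weight $k$-packing of $E^*$ with respect to $\mathcal P$, where $E^* = \{e_1, \ldots, e_k : e \in E\}$ and $w^*(e_i) = w(e)\cdot(k - 2i + 1)$. So the task is to encode ``pick $k$ shortest $st$-paths, using $i$ copies of edge $e$ across them contributes $\sum_{1 \le j \le i} w^*(e_j)$ to the objective'' as a min cost flow of value $k$.

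\medskip

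\textbf{Step 1: isolating the shortest-path subgraph.} First I would compute, via Bellman--Ford (edge lengths are positive but the graph is directed, so $O(|V||E|)$ suffices), the distance $\dist(s,v)$ and $\dist(v,t)$ for every vertex $v$. Let $D'$ be the subgraph consisting of all edges $e = (u,v)$ that are \emph{tight}, i.e. $\dist(s,u) + \ell(e) + \dist(v,t) = \dist(s,t)$. The key, standard fact is that a path is a shortest $st$-path in $(D,\ell)$ if and only if it is an $st$-path using only tight edges; moreover $D'$ is a DAG (a cycle of tight edges would have total length $0$, contradicting $\ell > 0$). This is where the positivity of $\ell$ is essential — it is also exactly why the disjoint-``short''-paths variant becomes NP-hard.

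\medskip

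\textbf{Step 2: the flow gadget.} On $D'$, build the flow network: each original tight edge $e$ is replaced by $k$ parallel arcs $e_1, \ldots, e_k$, the $i$-th having capacity $1$ and cost $-w^*(e_i) = -w(e)(k - 2i+1)$ (we negate because min cost flow minimizes while we maximize $\dist_w$). Each vertex keeps infinite capacity. Set the required flow value to $k$ from $s$ to $t$. An integral flow of value $k$ decomposes into $k$ unit $st$-walks in the DAG $D'$, hence $k$ shortest $st$-paths (walks in a DAG are paths); conversely any $k$ shortest $st$-paths give such a flow. Because $w^*(e_1) > w^*(e_2) > \cdots$, a cost-minimizing integral flow, whenever it sends $i$ units of total flow through the arcs associated with $e$, will use the $i$ cheapest-cost arcs $e_1, \ldots, e_i$ — matching exactly the ``consecutive elements'' normalization in the proof of \Cref{lem:w-packing}. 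Hence the minimum cost equals $-\max \dist_w$ over all choices of $k$ shortest $st$-paths. Integrality of an optimal flow is guaranteed since all capacities are integral. Recovering the paths is a routine flow decomposition.

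\medskip

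\textbf{Step 3: running time.} The network has $O(|V|)$ vertices and $O(k|E|)$ arcs, so \Cref{thm:mcf} solves it in $O((k|E|)^2 \log^2 |V|) = O(k^2 |E|^2 \log^2 |V|)$, which dominates the $O(|V||E|)$ preprocessing. The main thing to get right — the only genuinely non-routine point — is the argument that cost-minimization automatically forces the ``use copies $e_1,\ldots,e_i$'' pattern, so that the min cost flow optimum coincides with the max-weight $k$-packing optimum rather than merely bounding it; this follows from the strict monotonicity of $i \mapsto w^*(e_i)$ exactly as in \Cref{lem:w-packing}, and I would state it explicitly rather than leave it implicit.
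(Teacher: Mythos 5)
Your proposal is correct and follows essentially the same route as the paper: restrict to the shortest-path DAG of tight edges, apply \Cref{lem:w-packing} with $k$ parallel copies of each edge weighted by $w^*(e_i) = w(e)(k-2i+1)$, and solve a minimum cost flow of value $k$ with unit capacities and negated costs, giving $O((k|E|)^2\log^2|V|)$ via \Cref{thm:mcf}. The only cosmetic differences are that the paper prunes edges using the forward distance labels plus reachability (computed by Dijkstra, which suffices here since $\ell > 0$, making Bellman--Ford unnecessary) rather than your two-sided tightness condition, and the ``consecutive copies'' normalization you highlight is already handled inside the proof of \Cref{lem:w-packing}.
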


We first compute the shortest distance label ${\rm dist}: V \to \mathbb N_{\ge 0}$ from $s$ in time $O(|E| + |V| \log |V|)$ by Dijkstra's single source shortest path algorithm.
For each edge $e = (u, v) \in E$ with ${\rm dist}(u) \neq {\rm dist}(v) - \ell(e)$, we remove it from $D$.
We also remove vertices that are not reachable from $s$ and not reachable to $t$ in the removed graph.
This does not change any optimal solution since every path in $\mathcal P$ does not include such vertices and edges.
Then, the obtained graph, denoted by $D' = (V', E')$, has no directed cycles.

\begin{observation}\label{obs:sp}
    Every path in $\mathcal P$ is a path from $s$ to $t$ in $D'$.
    Moreover, every path from $s$ to $t$ in $D'$ belongs to $\mathcal P$.
\end{observation}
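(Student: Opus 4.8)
The plan is to prove the two inclusions separately, using the standard fact that Dijkstra's labels ${\rm dist}$ record the true shortest distances from $s$ (valid since $\ell$ is strictly positive) together with the definition of $D'$. Throughout, call an edge $e = (u,v) \in E$ \emph{tight} if ${\rm dist}(u) + \ell(e) = {\rm dist}(v)$; by construction $D'$ is $D$ with all non-tight edges removed and then all vertices that are unreachable from $s$ or cannot reach $t$ removed. Recall also that ${\rm dist}(s) = 0$ and ${\rm dist}(v) \le {\rm dist}(u) + \ell(e)$ for every edge $e = (u,v)$.

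To show that every $P \in \mathcal P$ is a path from $s$ to $t$ in $D'$, I would write $P = (s = v_0, \dots, v_r = t)$ with edges $e_i = (v_{i-1}, v_i)$ and first argue that each prefix of $P$ is itself a shortest path, so that $\sum_{j \le i} \ell(e_j) = {\rm dist}(v_i)$ for all $i$: otherwise, splicing a strictly shorter $s$-$v_i$ path in front of the suffix $v_i, \dots, v_r$ of $P$ would beat ${\rm dist}(t)$. Subtracting consecutive such identities gives $\ell(e_i) = {\rm dist}(v_i) - {\rm dist}(v_{i-1})$, i.e., every $e_i$ is tight, so no edge of $P$ is removed; then each $v_i$ is reachable from $s$ along $e_1, \dots, e_i$ and reaches $t$ along $e_{i+1}, \dots, e_r$ in the edge-removed graph, so no vertex of $P$ is removed either, and $P$ lies in $D'$.

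For the converse, I would take any path $Q = (s = u_0, \dots, u_q = t)$ in $D'$ with edges $e'_i = (u_{i-1}, u_i)$. Since $e'_i$ was not removed it is tight, so ${\rm dist}(u_i) = {\rm dist}(u_{i-1}) + \ell(e'_i)$; telescoping over $i$ yields $\ell(Q) = {\rm dist}(t) - {\rm dist}(s) = {\rm dist}(t)$, so $Q$ attains the $s$-$t$ distance and therefore $Q \in \mathcal P$. As a byproduct, this identity shows that ${\rm dist}$ strictly increases along every edge of $D'$, re-confirming that $D'$ is acyclic, which is used later.

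I do not expect a real obstacle here; the points deserving the most attention are the subpath-optimality argument in the first inclusion and the bookkeeping around the vertex-removal step — in particular, verifying both that no vertex of any $P \in \mathcal P$ gets removed (done above) and, dually, that every removed vertex and every non-tight edge lies on no shortest $st$-path, so that passing to $D'$ discards no element of $\mathcal P$; the latter also follows from the tightness characterization.
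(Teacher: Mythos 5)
Your proof is correct and follows exactly the reasoning the paper leaves implicit when it states this as an observation: shortest $st$-paths consist precisely of tight edges (by subpath optimality), so they survive the edge and vertex removals, and conversely any $st$-path in $D'$ telescopes to length ${\rm dist}(t)$ and hence lies in $\mathcal P$. Nothing is missing; your write-up simply makes explicit the standard shortest-path-DAG argument the authors rely on without proof.
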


From this directed acyclic graph $D'$, we construct a weighted directed multigraph $D^*$ by replacing each edge $e = (u, v)$ with $k$ copies $e_1, \ldots, e_k$ and setting $w^*(e_i)$ to $w(e)\cdot(k - 2i + 1)$ for $e \in E'$ and $1 \le i \le k$.
By~\Cref{lem:w-packing}, it is sufficient to find a maximum weight $k$-packing of $E(D^*)$ with respect to $\mathcal P$.

\begin{lemma}
    Let $D^*$ and $w: E(D^*) \to \mathbb N_{\ge 0}$ be as above. Then, there is a polynomial-time algorithm that finds a maximum weight $k$-packing of $E(D^*)$ with respect to $\mathcal P$.
\end{lemma}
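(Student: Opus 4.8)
The plan is to reduce the maximum weight $k$-packing problem on $D^*$ to a single minimum cost flow computation and then apply \Cref{thm:mcf}. First I would view $D^*$ directly as a flow network on vertex set $V'$ with source $s$ and sink $t$, give every copy $e_i$ capacity $1$ and cost $-w^*(e_i)$ (we negate because a minimum cost flow minimizes cost whereas we want to maximize weight), and ask for an integral flow of value exactly $k$. The point of keeping $k$ parallel copies of each edge $e$ is that $w^*(e_i)=w(e)(k-2i+1)$ is monotone in $i$, so a minimum cost flow that routes $m$ units through the bundle of copies of $e$ automatically selects the $m$ copies of largest weight; this is precisely how the quantity $w^*(S^*)$ tallies the weight of a $k$-packing in the proof of \Cref{lem:w-packing}.

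Next I would set up a weight-preserving (up to sign) correspondence between integral feasible flows of value $k$ in this network and $k$-packings of $E(D^*)$ with respect to $\mathcal P$. Given a $k$-packing $S^*$ with witnessing partition $S_1,\dots,S_k$, \Cref{obs:sp} tells us that each $S_j$ is the copy-lift of a simple $st$-path of $D'$; since the $S_j$ are disjoint subsets of $E(D^*)$, putting $f(e^*)=1$ for $e^*\in S^*$ and $f(e^*)=0$ otherwise gives a feasible flow of value $k$ with cost $-w^*(S^*)$. Conversely, since $D'$ and hence $D^*$ is acyclic, an integral feasible flow $f$ of value $k$ carries no flow on any cycle, so flow decomposition expresses $f$ as a sum of $st$-paths; the unit capacities force every such path to carry exactly one unit and to be pairwise copy-disjoint, so there are exactly $k$ of them, their union is a $k$-packing, and its $w^*$-weight equals $\sum_{e^*}w^*(e^*)f(e^*)$, the negated cost of $f$. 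Thus a minimum cost flow of value $k$ gives a maximum weight $k$-packing, and the packing is read off by a standard flow decomposition that does not dominate the running time.

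Finally I would deal with the cost signs and bound the running time. The costs $-w^*(e_i)$ need not be non-negative, but $D^*$, being acyclic, has no cycle at all, hence no negative cycle, and so a minimum cost flow of a prescribed value $k$ is well defined; should one insist on genuinely non-negative costs, a single scan of $D^*$ in topological order yields shortest-path potentials $\phi$, and replacing each cost by the reduced cost $-w^*(e_i)+\phi(u)-\phi(v)\ge 0$ merely shifts the cost of every value-$k$ flow by the constant $k(\phi(s)-\phi(t))$, leaving optimal packings unchanged. Since $|V(D^*)|\le|V|$ and $|E(D^*)|=k\,|E(D')|\le k|E|$, \Cref{thm:mcf} runs in $O(k^2|E|^2\log^2|V|)$ time, which proves the lemma and, together with the $O(|E|+|V|\log|V|)$ Dijkstra preprocessing, yields the bound of \Cref{thm:dsp}.

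I expect the only genuine subtlety to be the flow-to-packing direction: one must argue that an integral value-$k$ flow really decomposes into $k$ \emph{distinct}, copy-disjoint $st$-paths rather than, say, one path carrying $k$ units, and this is exactly where the unit capacities on the copies together with acyclicity of $D'$ are used; everything else is routine bookkeeping.
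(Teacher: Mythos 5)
Your proposal is correct and follows essentially the same route as the paper: reduce to a minimum cost flow instance on $D^*$ with source $s$, sink $t$, unit capacities, costs $-w^*(e)$, and flow value $k$, then use integrality and the unit capacities to decompose the optimal flow into $k$ edge-disjoint $st$-paths forming the packing. The extra care you take with the flow-to-packing correspondence and with negative costs (via acyclicity or potentials) only fills in details the paper leaves implicit.
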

\begin{proof}
    We reduce the $k$-packing problem to the minimum-cost flow problem, which can be solved in polynomial time (by \Cref{thm:mcf}).
    The source and the sink vertices of $D^*$ are defined as $s$ and $t$, respectively. For each $e \in E(D^*)$, we assign the capacity value of 1 (to prevent edge sharing) and the cost value of $-w^*(e)$. The flow requirement is set to $k$.
    Then we can find a flow $f: E^* \to \mathbb R_{\ge 0}$ maximizing $\sum_{e \in E^*}f(e) \cdot w^*(e)$ in polynomial time.
    Moreover, it is well known that $f$ is integral, that is, $f(e) \in \mathbb N_{\ge 0}$ for every $e \in E^*$, as the capacity is integral.
    Since all of the edges in $E^*$ has a capacity of 1, $f$ can be decomposed into $k$ edge-disjoint $st$-paths $P_1, P_2,  \ldots, P_k$, which implies that the maximum weight $k$-packing with respect to $\mathcal P$ can be found in polynomial time as well.
\end{proof}

Since $D^*$ has at most $|V|$ vertices and at most $k|E|$ edges, \Cref{thm:dsp} follows.

We note that our algorithm can be used for finding diverse shortest $st$-paths in \emph{undirected} graphs.
For undirected graphs, we first compute the shortest distance label ${\rm dist}$.
Then, for each (undirected) edge $e = \{u, v\}$, we orient $e$ as $(u, v)$ if ${\rm dist}(u) = {\rm dist}(v) - \ell(e)$ and remove $e$ otherwise.
Then, the obtained graph $D'$ is indeed a directed acyclic graph and satisfies \Cref{obs:sp}.
Hence we can apply the same algorithm to $D'$.

In practical situations, it suffices to find diverse ``nearly shortest'' paths.
More specifically, given the same instance of {\sc Diverse Shortest $st$-Paths} and a positive threshold $\theta$, we are asked to find a set of $k$ paths $P_1, P_2, \ldots, P_k$ from $s$ to $t$ in $(D, \ell)$ such that the length of $P_i$ is at most $\theta$ and $\dist_w(E(P_1), E(P_2), \ldots, E(P_k))$ is maximized.
Here, the length of a path $P_i$ is defined as $\sum_{e \in E(P_i)}\ell(e)$.
Let us note that paths are simple, that is, every vertex appears at most once in the path.
The problem, which we call {\sc Diverse Short $st$-Paths}, indeed generalizes {\sc Diverse Shortest $st$-Paths}: If $\theta$ is the shortest path distance from $s$ to $t$ in $D$, the problem corresponds to {\sc Diverse Shortest $st$-Paths}.
Unfortunately, this generalization is intractable.

\begin{theorem}
    {\sc Diverse Short $st$-Paths} is NP-hard even if $k = 2$ and $w(e) = \ell(e) = 1$ for all $e \in E$.
\end{theorem}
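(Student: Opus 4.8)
The plan is to reduce from a classical NP-hard problem about length-bounded disjoint paths. It is known (Itai, Perl, and Shiloach) that deciding whether a graph contains two internally vertex-disjoint $st$-paths each using at most $\theta$ edges is NP-complete; via the standard vertex-splitting transformation (replace every vertex $v\notin\{s,t\}$ by an arc $v_{\rm in}\to v_{\rm out}$ through which all of $v$'s traffic is routed), the same holds for \emph{edge}-disjoint $st$-paths, and there all edge lengths may be taken equal to $1$. So I would start from: given a unit-length graph $H$, vertices $s,t$, and an integer $\theta$, it is NP-hard to decide whether $H$ has two edge-disjoint $st$-paths each of length at most $\theta$.

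The obstacle is a mismatch of flavour: {\sc Diverse Short $st$-Paths} rewards large symmetric differences, so among paths of length at most $C$ it is really sensitive to disjoint pairs that are simultaneously as long as permitted, i.e.\ of length \emph{exactly} $C$, whereas the starting problem asks only for length \emph{at most} $\theta$. To bridge this I would attach a ``spender'' gadget behind $t$: add a new sink $t'$ and two internally vertex-disjoint ``combs'' from $t$ to $t'$, each comb being a directed path $t\to a_1\to a_2\to\cdots\to a_r$ together with an arc $a_j\to t'$ for every $j$, so that a single comb realizes a $t$-$t'$ route of any length in $\{2,\dots,r+1\}$. Taking $r=\theta$ and the new threshold $C:=\theta+2$, every $s$-$t'$ path of length exactly $C$ splits uniquely at the cut vertex $t$ as an $s$-$t$ path in $H$ of length $\ell\in\{1,\dots,\theta\}$ followed by a comb route of length $C-\ell\in\{2,\dots,\theta+1\}$; moreover two edge-disjoint such paths are forced to traverse different combs (two routes inside the same comb always share the first edge out of $t$), so they are edge-disjoint overall iff their $s$-$t$ parts are edge-disjoint in $H$. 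Hence the resulting unit-length graph $H'$ has two edge-disjoint $s$-$t'$ paths each of length exactly $C$ iff the original instance has two edge-disjoint $st$-paths of length at most $\theta$.

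Finally I would invoke {\sc Diverse Short $st$-Paths} on the instance $(H', s, t', \ell\equiv w\equiv 1, k=2, \theta:=C)$. For any two $s$-$t'$ paths $P_1,P_2$ of length at most $C$ we have $\dist_w(E(P_1),E(P_2))=|E(P_1)\triangle E(P_2)|\le|E(P_1)|+|E(P_2)|\le 2C$, with equality precisely when $|E(P_1)|=|E(P_2)|=C$ and $E(P_1)\cap E(P_2)=\emptyset$. Therefore the optimum of {\sc Diverse Short $st$-Paths} on this instance equals $2C$ iff $H'$ has two edge-disjoint $s$-$t'$ paths of length exactly $C$, which by the previous paragraph holds iff we started from a yes-instance. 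Composing the reductions yields NP-hardness even with $k=2$ and $w\equiv\ell\equiv 1$.

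I expect the spender gadget to be where the care is needed: one must verify that it introduces no $s$-$t'$ path bypassing $H$, that $t$ remains a cut separating $s$ from $t'$ and $t'$ is a genuine sink, that the set of realizable route lengths is exactly $\{2,\dots,\theta+1\}$ per comb, and that ``edge-disjoint and of length $C$'' forces the two paths into distinct combs — all while keeping edge lengths unit and the construction polynomial in $|H|+\theta$. The vertex-splitting step and the final chain of inequalities are routine bookkeeping once these points are pinned down.
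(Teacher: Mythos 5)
Your reduction is correct in substance, but it takes a genuinely different route from the paper. The paper gives a short, self-contained reduction from directed $st$-Hamiltonian path: it adds a single new $st$-path of length $|V|$, sets $\theta=|V|$, and exploits the fact that this added path is the \emph{unique} allowed path of length $\theta$, so any pair with $|E(P_1)\triangle E(P_2)|\ge 2|V|-1$ must consist of the added path plus a Hamiltonian path of $D$. You instead start from the classical hardness of finding two disjoint length-bounded $st$-paths and build a comb ``spender'' gadget to convert the ``length at most $\theta$'' condition into ``length exactly $C$'', so that the sum-of-Hamming-distances objective reaching $2C$ certifies disjointness; your analysis of the gadget (every $s$-$t'$ path of length $C$ factors through $t$, realizable comb lengths are exactly $\{2,\dots,\theta+1\}$, edge-disjointness forces distinct combs) checks out, and the equality analysis $|E(P_1)\triangle E(P_2)|=2C$ iff both paths have length $C$ and are disjoint is right. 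What the paper's route buys is simplicity and self-containedness (no external length-bounded-disjoint-paths hardness, no gadget); what yours buys is modularity, since any hardness result for length-bounded disjoint paths plugs in directly. Two caveats you should pin down: the base result you invoke is more standardly attributed to Li, McCormick, and Simchi-Levi (the min-max two disjoint paths problem); Itai, Perl, and Shiloach prove hardness for maximizing the \emph{number} of disjoint paths under a small fixed length bound, so check that the exact statement you need (two disjoint paths, bound part of the input) is covered by your citation. Also, since the paper defines the problem on directed graphs, you should make explicit that your chain produces a directed instance -- your bidirection-plus-vertex-splitting step does achieve this (anti-parallel arcs cannot both be used by internally vertex-disjoint paths, and splitting preserves unit lengths up to the monotone change $\ell\mapsto 2\ell-1$ in the bound), but as written this is only implicit.
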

\begin{proof}
    We perform a polynomial-time reduction from the $st$-Hamiltonian path problem, which is well known to be NP-complete~\cite{Karp:Complexity:1972}.
    In this problem, we are given a directed graph $D = (V, E)$ and $s, t \in V$ with $s \neq t$ and asked to determine whether $D$ has a Hamiltonian path from $s$ to $t$.
    From $D$, we construct a directed graph $D'$ by adding a path $P$ from $s$ to $t$ of length $|V|$.
    Now, we claim that $D$ has a Hamiltonian path from $s$ to $t$ if and only if $D'$ has two $st$-paths $P_1$ and $P_2$ of length at most $\theta = |V|$ with $|E(P_1) \triangle E(P_2)| \ge 2|V| - 1$.
    
    Suppose that $D$ has a Hamiltonian path $P'$ from $s$ to $t$.
    Then, $P'$ is disjoint from $P$ and hence $|E(P') \triangle E(P)| = |E(P')| + |E(P)| = 2|V| - 1$.
    
    Conversely, assume that $D'$ has two $st$-paths $P_1$ and $P_2$ of length at most $|V|$ with $|E(P_1) \triangle E(P_2)| \ge 2|V| - 1$.
    Since every $st$-path $P'$ distinct from $P$ has length at most $|V| - 1$, one of the two paths, say $P_1$, must be $P$.
    Moreover, such a path $P'$ is disjoint from $P$ as otherwise $P'$ must contain $s$ or $t$ as an internal vertex.
    This implies that $P_2$ is also an $st$-path in $D$.
    As $|E(P_1) \triangle E(P_2)| \ge 2|V| - 1$, the length of $P_2$ is at least $|V| - 1$, implying that $P_2$ is a Hamiltonian path from $s$ to $t$ in $D$.
\end{proof}

\section{Diverse Matroid Bases and Arborescences}
In this section, we discuss some tractable problems related to matroids.
Let $E$ be a finite set.
A \emph{matroid} $\mathcal M$ on $E$ is a pair $(E, \mathcal I)$ with $\mathcal I \subseteq 2^E$ satisfying the following axioms:
(1) $\emptyset \in \mathcal I$; 
(2) For $X \in \mathcal I$, every $Y \subseteq X$ is contained in $\mathcal I$;
(3) For $X, Y \in \mathcal I$ with $|X| < |Y|$, there is $e \in Y$ such that $X \cup \{e\} \in \mathcal I$.
Every set in $\mathcal I$ is called an \emph{independent set} of a matroid $\mathcal M = (E, \mathcal I)$.
A \emph{base} of $\mathcal M$ is an inclusion-wise maximal independent set of $\mathcal M$.
The first part of this section is devoted to develop a polynomial-time algorithm for the following problem.

\begin{definition}
    Given a matroid $\mathcal M = (E, \mathcal I)$ with a weight function $w: E \to \mathbb Z$ and an integer $k$, {\sc Weighted Diverse Matroid Bases} asks for $k$ bases $B_1, \ldots, B_k$ of $\mathcal M$ such that $\dist_w(B_1, \ldots, B_k)$ is maximized.
\end{definition}

In \cite{Hanaka:Finding:2021}, they consider a special case of {\sc Weighted Diverse Matroid Bases} where each element in the ground set $E$ has a unit weight and give a polynomial-time algorithm for it, assuming that the independence oracle $\mathcal I$ can be evaluated in polynomial time.
This result is obtained by reducing the problem to that of finding disjoint bases of a matroid, which can be solved in polynomial time.

\begin{theorem}[\cite{Edmonds:Matroid:1968,Nash-Williams:application:1967}]\label{thm:matroid-union}
    Let $\mathcal M = (E, \mathcal I)$ be a matroid and let $w: E \to \mathbb Z$.
    Suppose that the membership of $\mathcal I$ can be checked in polynomial time.
    Then, the problem of deciding whether there is a set of mutually disjoint $k$ bases $B_1, \ldots, B_k$ of $\mathcal M$ can be solved in polynomial time.
    Moreover, if the answer is affirmative, we can find such bases that maximize the total weight (i.e., $\sum_{1 \le i \le k} \sum_{e \in B_k} w(e)$) in polynomial time.
\end{theorem}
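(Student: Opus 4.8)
My plan is to derive this from the \emph{matroid union} (equivalently, \emph{matroid partition}) theorem, of which this statement is essentially a specialization. Let $r = r_{\mathcal M}(E)$ be the rank of $\mathcal M$, so every base has size exactly $r$. First I would form the $k$-fold union $\mathcal M^{(k)}$ of $\mathcal M$ with itself: its independent sets are precisely the $S \subseteq E$ that can be partitioned as $S = I_1 \sqcup \cdots \sqcup I_k$ with each $I_j$ independent in $\mathcal M$. By the Nash--Williams formula, $\mathcal M^{(k)}$ is again a matroid, with rank function
\begin{align*}
r_{\mathcal M^{(k)}}(E) = \min_{F \subseteq E}\bigl(|E \setminus F| + k\cdot r_{\mathcal M}(F)\bigr),
\end{align*}
and -- this is the algorithmic heart of the argument -- a maximum independent set of $\mathcal M^{(k)}$, \emph{together with a witnessing partition into $k$ $\mathcal M$-independent sets}, can be computed with a polynomial number of calls to the independence oracle of $\mathcal M$ via an augmenting-path procedure in an auxiliary exchange graph. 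I would take this fact as the black box supplied by the cited results and build everything else on top of it.

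For the \textbf{decision part}, I would argue the equivalence: $\mathcal M$ has $k$ mutually disjoint bases if and only if $r_{\mathcal M^{(k)}}(E) = k r$. Note $r_{\mathcal M^{(k)}}(E) \le k r$ always, since in any witnessing partition each part is $\mathcal M$-independent and hence has size at most $r$. If $B_1, \dots, B_k$ are disjoint bases, then $B_1 \cup \cdots \cup B_k$ is independent in $\mathcal M^{(k)}$ and has size $kr$, so the rank equals $kr$. Conversely, if $S$ is independent in $\mathcal M^{(k)}$ with $|S| = kr$ and $S = I_1 \sqcup \cdots \sqcup I_k$ is a witnessing partition, then $\sum_j |I_j| = kr$ while each $|I_j| \le r$, forcing $|I_j| = r$ for all $j$; thus each $I_j$ is a maximum independent set of $\mathcal M$, i.e.\ a base, and the $I_j$ are pairwise disjoint. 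So the decision reduces to computing $r_{\mathcal M^{(k)}}(E)$ and $r$ (one maximum-independent-set computation each, using only the oracle of $\mathcal M$) and comparing them.

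For the \textbf{weighted part}, I would assume the decision answer is affirmative, so $r_{\mathcal M^{(k)}}(E) = kr$ and in particular every base of $\mathcal M^{(k)}$ has size $kr$. For any collection $B_1,\dots,B_k$ of disjoint bases of $\mathcal M$ we have $\sum_j w(B_j) = w(B_1 \cup \cdots \cup B_k)$, and by the argument above the sets of the form $B_1 \cup \cdots \cup B_k$ are \emph{exactly} the bases of $\mathcal M^{(k)}$. Hence maximizing the total weight of $k$ disjoint bases of $\mathcal M$ is the same as finding a maximum-weight base of the matroid $\mathcal M^{(k)}$. I would do this with the usual greedy algorithm: sort $E$ in non-increasing order of $w$ and insert elements one at a time, keeping the current set independent in $\mathcal M^{(k)}$ (each test being one matroid-partition computation, hence polynomial time); the greedy output is a maximum-weight base for arbitrary integer weights, including negative ones. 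Finally, the witnessing $k$-partition returned for the greedy set splits it into $k$ $\mathcal M$-independent sets of sizes summing to $kr$, hence into $k$ bases, which is the required solution.

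The main obstacle is entirely in the first step: establishing that $\mathcal M^{(k)}$ is a matroid with the stated (min--max) rank formula, and, more importantly, that a maximum independent set of $\mathcal M^{(k)}$ plus a certifying $k$-partition can be produced in polynomial time using only the oracle of $\mathcal M$ -- this is the matroid-union augmenting-path machinery of Edmonds and Nash--Williams. Everything else (the two equivalences with ``$k$ disjoint bases'' and the reduction of the optimization to a greedy maximum-weight base computation in $\mathcal M^{(k)}$) is routine bookkeeping once that machinery is in hand.
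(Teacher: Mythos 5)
The paper does not prove this statement at all -- it is imported wholesale from the cited works of Edmonds and Nash--Williams -- so there is no in-paper argument to compare against; judged on its own, your derivation is correct and is essentially the standard way to obtain the theorem from the matroid-union machinery. The decision equivalence ($k$ disjoint bases exist iff $r_{\mathcal M^{(k)}}(E) = kr$) and the counting argument forcing each part of a witnessing partition of a size-$kr$ set to be a base are sound, and the weighted step is also fine: since all bases of $\mathcal M^{(k)}$ have the same cardinality, the greedy algorithm returns a maximum-weight base even for arbitrary integer (possibly negative) weights, and each independence test for $\mathcal M^{(k)}$ is one matroid-partition computation, i.e.\ polynomially many calls to the oracle of $\mathcal M$. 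The only caveat is that the ``black box'' you invoke -- that a maximum independent set of $\mathcal M^{(k)}$ together with a certifying partition into $k$ $\mathcal M$-independent sets is computable with polynomially many oracle calls -- is precisely the content of the cited results, so your write-up is best read as a reduction of the stated theorem to the classical matroid union/partition theorem plus routine bookkeeping, which matches the intended provenance of the citation.
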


By applying \Cref{lem:w-packing}, we have a polynomial-time algorithm for {\sc Weighted Diverse Matroid Bases} as well.

\begin{theorem}\label{thm:base}
    {\sc Weighted Diverse Matroid Bases} can be solved in polynomial time.
\end{theorem}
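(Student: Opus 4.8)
The plan is to reduce {\sc Weighted Diverse Matroid Bases} to a weighted packing problem via \Cref{lem:w-packing}, and then to solve that packing problem using \Cref{thm:matroid-union}. Take $U = E$ and let $\mathcal S$ be the family of bases of $\mathcal M$. By \Cref{lem:w-packing}, it suffices to find, in polynomial time, a maximum-weight $k$-packing $S^*$ of $U^* = \{e_1,\dots,e_k : e \in E\}$ with respect to $\mathcal S$ under the weights $w^*(e_i) = w(e)(k-2i+1)$; the bases $B_i$ sought by the problem are then recovered as the $f$-images of the parts of $S^*$.

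To compute such a packing, I would introduce the auxiliary matroid $\widetilde{\mathcal M}$ on $U^*$ obtained from $\mathcal M$ by replacing each element $e$ with $k$ parallel copies $e_1,\dots,e_k$: a set $T \subseteq U^*$ is independent in $\widetilde{\mathcal M}$ precisely when $f|_T$ is injective and $f(T) \in \mathcal I$. This parallel extension is a matroid with $\mathrm{rank}(\widetilde{\mathcal M}) = \mathrm{rank}(\mathcal M)$; its bases are exactly the sets $T \subseteq U^*$ with $f|_T$ injective and $f(T)$ a base of $\mathcal M$; its independence oracle is evaluated in polynomial time from that of $\mathcal M$; and it admits $k$ mutually disjoint bases (realize any base of $\mathcal M$ exactly $k$ times, each time with a fresh copy index). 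Consequently, \Cref{thm:matroid-union} applied to $\widetilde{\mathcal M}$ with the integer weights $w^*$ returns, in polynomial time, $k$ mutually disjoint bases $\widetilde B_1,\dots,\widetilde B_k$ of $\widetilde{\mathcal M}$ maximizing $\sum_{i=1}^k w^*(\widetilde B_i)$.

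The remaining --- and main --- step is to show that the quantity just computed equals the maximum weight of a $k$-packing of $U^*$ with respect to $\mathcal S$, so that setting $B_i := f(\widetilde B_i)$ yields an optimal solution by \Cref{lem:w-packing}. One inequality is immediate, since the union of $k$ mutually disjoint bases of $\widetilde{\mathcal M}$ is itself a $k$-packing of the same $w^*$-weight. For the reverse inequality I would take an optimal tuple of bases $(B_1,\dots,B_k)$ of $\mathcal M$ and run the ``only if'' construction from the proof of \Cref{lem:w-packing}: assigning, for each $e$, the copies $e_1,\dots,e_{m(e)}$ one per part over the indices $i$ with $e \in B_i$ produces parts $S^*_1,\dots,S^*_k$ in which each element of $B_i$ appears through exactly one copy; hence each $S^*_i$ is a base of $\widetilde{\mathcal M}$, the parts are pairwise disjoint, and $\sum_i w^*(S^*_i) = \dist_w(B_1,\dots,B_k)$ by the computation in that proof, which is the optimum. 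Thus the maximum over $k$-tuples of disjoint bases of $\widetilde{\mathcal M}$, the maximum over $k$-packings, and $\max \dist_w$ all coincide.

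I expect the delicate point to be exactly this equivalence: one must be sure that an optimal $k$-packing need not place two copies of the same ground element in one part, which is where the structure of the weighting $w^*$ (already exploited in \Cref{lem:w-packing}) matters --- the contribution of an element to a $k$-packing is governed by its multiplicity, and a given multiplicity is best realized by spreading occurrences across distinct parts using the lowest-indexed copies, so a part carrying two copies of $e$ can be thinned to one copy of $e$ without loss. A minor bookkeeping point is that \Cref{thm:matroid-union} is here invoked with weights $w^*$ of mixed sign; this is harmless because every feasible solution has the same cardinality $k\cdot\mathrm{rank}(\mathcal M)$, so the optimum is invariant under a uniform shift of the weights.
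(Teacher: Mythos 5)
Your proposal is correct and follows essentially the same route as the paper's proof: you build the same parallel-extension matroid with $k$ copies of each element (the paper's $\mathcal M^*$), assign the weights $w^*(e_i)=w(e)(k-2i+1)$, reduce via \Cref{lem:w-packing}, and solve the resulting maximum-weight disjoint-bases problem with \Cref{thm:matroid-union}. The only difference is that you spell out details the paper leaves implicit, namely the equivalence between $k$-packings with respect to the bases of $\mathcal M$ and tuples of disjoint bases of the extension, the feasibility of $k$ disjoint bases, and the polynomial-time independence oracle.
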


\begin{proof}
    The proof is almost analogous to that in \cite{Hanaka:Finding:2021}.
    Let $\mathcal M = (E, \mathcal I)$ be a matroid and let $e \in E$.
    Define $\mathcal J = \mathcal I \cup \{(F \setminus \{e\}) \cup \{e'\} : F \in \mathcal I \land e \in F\}$.
    Then, $(E \cup \{e'\}, \mathcal J)$ is also a matroid~\cite{Hanaka:Finding:2021,Nagamochi:Complexity:1997}.
    We define $k$ copies $e_1, e_2, \ldots, e_k$ for each $e \in E$ and $E^* = \{e_1, \ldots, e_k : e \in E\}$.
    Then, the pair $\mathcal M^* = (E^*, \mathcal I^*)$ is a matroid if $\mathcal I^*$ consists of all sets $F \subseteq 2^{E^*}$ such that $F$ contains at most one copy of $e_1, \ldots, e_k$ for each $e \in E$ and $\bigcup_{e_i \in F} f(e_i) \in \mathcal I$, where $f(e_i) = e$ for $e \in E$ and $1 \le i \le k$.
    
    To find a set of $k$ bases $B_1, \ldots, B_k$ of $\mathcal M$ maximizing $\dist_w(B_1, \ldots, B_k)$, by \Cref{lem:w-packing}, it suffices to find a maximum weight $k$-packing with respect to the base family of $\mathcal M^*$ under a weight function $w^*$ with $w^*(e_i) = w(e)\cdot(k - 2i + 1)$ for $e \in E$ and $1 \le i \le k$, which can be solved in polynomial time by \Cref{thm:matroid-union}.
\end{proof}

\Cref{thm:base} allows us to find diverse spanning trees in undirected graphs as the set of edges in a spanning tree forms a base of a graphic matroid~\cite{Oxley:Matroid:2006}.
In this section, we develop a polynomial-time algorithm for a directed version of this problem.
Let $D = (V, E)$ be a directed graph and let $r \in V$.
We say that a subgraph $T$ of $D$ is an {\em arborescence} (with root $r$) if for every $v \in V$, there is exactly one directed path from $r$ to $v$ in $T$.
In other words, an arborescence is a spanning subgraph of $D$ in which each vertex except $r$ has in-degree one and its underlying undirected graph is a tree.
The problem we consider here is defined as follows.

\begin{definition}
    Given an edge-weighted directed graph $D = (V, E)$ with weight function $w: E \to \mathbb Z$, $r \in V$, and an integer $k$, {\sc Weighted Diverse Arborescences} asks for $k$ arborescences $T_1, \ldots, T_k$ of $D$ with root $r$ such that $\dist_w(E(T_1), \ldots, E(T_k))$ is maximized.
\end{definition}

Note that we cannot apply \Cref{thm:base} to this problem as we do not know whether the set of arborescences can be defined as the family of bases of a matroid.
However, Edmonds \citeyearpar{Edmonds:Some:1975} gave a polynomial-time algorithm for finding $k$ arc-disjoint arborescences with the maximum total weight.

\begin{theorem}\label{thm:arb}
    {\sc Weighted Diverse Arborescences} can be solved in polynomial time.
\end{theorem}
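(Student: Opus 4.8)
The plan is to follow the template used for \Cref{thm:dsp} and \Cref{thm:base}: apply \Cref{lem:w-packing} to turn {\sc Weighted Diverse Arborescences} into a maximum-weight $k$-packing problem, and then recognize this packing problem as the problem of packing $k$ arc-disjoint arborescences of maximum total weight, which is solvable in polynomial time by the algorithm of Edmonds~\cite{Edmonds:Some:1975} mentioned above. Concretely, from the instance $(D, r, w, k)$ with $D = (V, E)$, I would build the directed \emph{multigraph} $D^{*}$ obtained by replacing each arc $e \in E$ with $k$ parallel copies $e_{1}, \ldots, e_{k}$ and setting $w^{*}(e_{i}) = w(e) \cdot (k - 2i + 1)$, exactly as in the construction preceding \Cref{lem:w-packing}, with $f(e_{i}) = e$ as there; let $\mathcal{A} \subseteq 2^{E}$ be the family of arc sets of arborescences of $D$ rooted at $r$ ($r$-arborescences for short). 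By \Cref{lem:w-packing}, it suffices to compute a $k$-packing $S^{*}$ of $E(D^{*})$ with respect to $\mathcal{A}$ that maximizes $w^{*}(S^{*})$, together with the associated $k$ arborescences.

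The core of the argument is to show that the maximum $w^{*}$-weight of a $k$-packing of $E(D^{*})$ with respect to $\mathcal{A}$ equals the maximum total $w^{*}$-weight over all collections of $k$ arc-disjoint $r$-arborescences of $D^{*}$, and I would establish the two inequalities separately. In one direction, any $k$ arc-disjoint $r$-arborescences $T_{1}^{*}, \ldots, T_{k}^{*}$ of $D^{*}$ project, by collapsing each copy $e_{i}$ back to $e$, onto $k$ arborescences of $D$: this collapse is injective on each $E(T_{j}^{*})$, since an arborescence has in-degree at most $1$ at every vertex and hence cannot contain two parallel copies of the same arc, so $\{f(x) : x \in E(T_{j}^{*})\} \in \mathcal{A}$ for all $j$ and $\bigcup_{j} E(T_{j}^{*})$ is a $k$-packing of the same $w^{*}$-weight. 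In the other direction, by \Cref{lem:w-packing} the maximum $w^{*}$-weight of a $k$-packing equals the maximum of $\dist_{w}$ over $\mathcal{A}$; given $r$-arborescences $T_{1}, \ldots, T_{k}$ of $D$ attaining this maximum, I would lift them to $D^{*}$ by letting, for each arc $e$ and each index $j$ with $e \in E(T_{j})$, the $j$-th lift use the copy $e_{\rho}$, where $\rho$ is the rank of $j$ among all such indices; the lifts are arc-disjoint $r$-arborescences of $D^{*}$, and the computation from the proof of \Cref{lem:w-packing} shows that their total $w^{*}$-weight equals $\dist_{w}(E(T_{1}), \ldots, E(T_{k}))$. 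Both quantities therefore equal the optimum of {\sc Weighted Diverse Arborescences}.

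It remains to compute, in polynomial time, a maximum-weight packing of $k$ arc-disjoint $r$-arborescences in $D^{*}$. First note that $D^{*}$ admits $k$ arc-disjoint $r$-arborescences if and only if $r$ reaches every vertex of $D$ (for the non-trivial direction, lift a single $r$-arborescence of $D$ exactly $k$ times, using a distinct copy of each arc each time), which is precisely the condition for {\sc Weighted Diverse Arborescences} to be feasible, so infeasibility is detected up front. Moreover, every union of $k$ arc-disjoint $r$-arborescences of $D^{*}$ has exactly $k(|V| - 1)$ arcs, so adding a large constant to every $w^{*}(e_{i})$ changes the objective of every candidate solution by the same amount; we may therefore assume $w^{*} \ge 0$ and run Edmonds' algorithm~\cite{Edmonds:Some:1975} to obtain $k$ arc-disjoint $r$-arborescences of $D^{*}$ of maximum total $w^{*}$-weight. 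Collapsing the copies in these arborescences then gives $k$ arborescences of $D$ that, by (the proof of) \Cref{lem:w-packing}, attain $\dist_{w}$ equal to the optimal packing weight, i.e., an optimal solution; since every step is polynomial, \Cref{thm:arb} follows.

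I expect the main obstacle to be the equivalence in the second paragraph, and within it the observation that the projection of an $r$-arborescence of the multigraph $D^{*}$ is again an $r$-arborescence of $D$ (rather than a multigraph with parallel arcs): this is exactly where the fact that an arborescence has in-degree at most $1$ at every vertex is used, and it is what prevents a ``wasteful'' $k$-packing from beating an arc-disjoint arborescence packing. The remaining ingredients (clearing negative weights by a uniform shift, which is justified because every feasible packing has exactly $k(|V| - 1)$ arcs, and extracting the arborescences via the proof of \Cref{lem:w-packing}) are routine bookkeeping.
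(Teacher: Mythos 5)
Your proposal follows exactly the paper's route for \Cref{thm:arb}: build the multigraph $D^{*}$ with $k$ weighted parallel copies $w^{*}(e_{i}) = w(e)\cdot(k-2i+1)$, invoke \Cref{lem:w-packing} to reduce to a maximum-weight $k$-packing with respect to the arborescence family, and solve that via Edmonds' maximum-weight packing of $k$ arc-disjoint $r$-arborescences (\Cref{thm:arb-packing}). The additional details you supply --- the projection/lifting equivalence between $k$-packings and arc-disjoint arborescence packings, the uniform weight shift, and the feasibility check --- are correct elaborations of steps the paper leaves implicit, so this is essentially the same proof.
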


The proof of \Cref{thm:arb} is almost analogous to that in \Cref{thm:base}.
We define a directed graph $D^*$ with vertex set $V$ from $D = (V, E)$ such that for $e = (u, v) \in E$, we add $k$ parallel edges $e_1, \ldots, e_k$ directed from $u$ to $v$ to $D'$.
Then, we set $w^*(e_i) = w(e)\cdot(k - 2i + 1)$ for $e \in E$ and $1 \le i \le k$.
By \Cref{lem:w-packing}, it is sufficient to find a maximum weight $k$-packing of $E(D^*)$ with respect to the family of arborescences of $D^*$, which can be found in polynomial time by the following result.

\begin{theorem}[\citeauthor{Edmonds:Some:1975}~\citeyear{Edmonds:Some:1975}]\label{thm:arb-packing}
    Given an edge-weighted directed multigraph $D = (V, E)$ with weight function $w: E \to \mathbb Z$, $r \in V$, and an integer $k$, the problem of finding $k$ edge-disjoint arborescences $T_1, \ldots, T_k$ with root $r$ maximizing $w(\bigcup_{1 \le i \le k} E(T_i))$ is solved in polynomial time.  
\end{theorem}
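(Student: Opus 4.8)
The plan is to reduce the packing subproblem of {\sc Weighted Diverse Arborescences} --- maximizing $w(\bigcup_{1 \le i \le k} E(T_i))$ over $k$ edge-disjoint spanning $r$-arborescences --- to \emph{weighted matroid intersection}, which is solvable in polynomial time. Since the arborescences are edge-disjoint, their union $F := \bigcup_{1 \le i \le k} E(T_i)$ is a disjoint union, so the objective equals $w(F) = \sum_{e \in F} w(e)$. I would therefore capture the edge sets $F$ that decompose into $k$ edge-disjoint spanning $r$-arborescences as the common independent sets of a prescribed cardinality of two matroids on the ground set $E$, and then maximize $w(F)$ over them.

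Concretely, let $M_1$ be the partition matroid in which $F \subseteq E$ is independent iff at most $k$ edges of $F$ enter each $v \in V \setminus \{r\}$ and no edge of $F$ enters $r$; this encodes that each of the $k$ arborescences supplies one in-edge per non-root vertex. Let $M_2$ be the $k$-fold matroid union of the graphic matroid of the underlying undirected multigraph of $D$, so $F$ is independent in $M_2$ iff its underlying edges decompose into $k$ forests; by \Cref{thm:matroid-union} this union admits a polynomial-time independence oracle. Any $F$ that decomposes into $k$ edge-disjoint spanning $r$-arborescences is independent in both matroids and has exactly $|F| = k(|V| - 1)$ edges, so the target is a common independent set of cardinality $k(|V| - 1)$.

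The crux, which is the step I expect to require the most care, is the converse: every common independent set $F$ with $|F| = k(|V| - 1)$ must decompose into $k$ edge-disjoint spanning $r$-arborescences. Independence in $M_1$ caps the in-edges at $k$ per non-root vertex, and since no edge of $F$ enters $r$, the cardinality $k(|V| - 1)$ forces \emph{exactly} $k$ in-edges at each $v \neq r$. Independence in $M_2$ bounds, for every nonempty $X \subseteq V \setminus \{r\}$, the number of $F$-edges with both endpoints in $X$ by $k(|X| - 1)$. Splitting the $k|X|$ in-edges incident to $X$ into those internal to $X$ and the number $\rho_F(X)$ entering $X$ from outside yields $\rho_F(X) = k|X| - (\text{internal edges}) \ge k|X| - k(|X| - 1) = k$. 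Hence $F$ satisfies the cut condition $\rho_F(X) \ge k$ for all nonempty $X \subseteq V \setminus \{r\}$, and the existence (cut-condition) form of Edmonds' branching theorem partitions $F$ into $k$ edge-disjoint spanning $r$-arborescences; the edge counts match, so the partition uses all of $F$.

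To conclude, I would run a weighted matroid intersection algorithm, which computes in polynomial time a maximum-weight common independent set of $M_1$ and $M_2$ of each cardinality. Reading off the value at cardinality $k(|V| - 1)$ returns the optimal $F$ (and reports infeasibility if no common independent set of that size exists); decomposing $F$ as above yields the optimal $k$ edge-disjoint arborescences. Since the two oracles, the intersection algorithm, and the decomposition all run in polynomial time, the result follows.
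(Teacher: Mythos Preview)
The paper does not prove \Cref{thm:arb-packing}; it is stated as a classical result of Edmonds and used as a black box. Your proposal therefore supplies a proof where the paper offers none, and the reduction you give---characterizing unions of $k$ edge-disjoint spanning $r$-arborescences as the maximum-cardinality common independent sets of the in-degree partition matroid and the $k$-fold union of the graphic matroid, then invoking weighted matroid intersection---is correct and is in fact a standard route to this result in the literature. The cut-condition argument for the converse direction is clean and accurate: exact in-degree $k$ at every non-root plus the forest-decomposition bound on internal edges forces $\rho_F(X) \ge k$, and then the (unweighted) Edmonds branching theorem delivers the decomposition.

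Two small remarks. First, your citation of \Cref{thm:matroid-union} for the independence oracle of $M_2$ is slightly off: that theorem, as stated in the paper, concerns packing disjoint bases, not the union matroid's oracle per se. The fact you need---that the $k$-fold union of a matroid with a polynomial-time oracle again has a polynomial-time oracle---is standard (matroid union / Nash--Williams), but is not literally what \Cref{thm:matroid-union} says. Second, note that your proof uses the unweighted Edmonds branching theorem as a lemma; this is not circular, since the statement being proved is the \emph{weighted optimization} version, but it is worth flagging that you are taking the cut-condition packing theorem for granted.
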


\section{Diverse Bipartite Matchings}\label{sec:bm}
A {\em matching} of a graph $G = (V, E)$ is a set $M \subseteq E$ of edges such that no two edges share their end vertices.
In this section, we consider the following problem.
\begin{definition}
    Let $G = (A \cup B, E, w)$ be an edge-weighted bipartite graph with $w: E \to \mathbb Z$, where $A$ and $B$ are the color classes of $G$.
    Let $k, p$ be positive integers.
    We denote by $\mathcal M$ the collection of all matchings of $G$ with cardinality exactly $p$.
    {\sc Diverse Bipartite Matchings} asks for $k$ matchings $M_1, \ldots, M_k \in \mathcal M$ maximizing $\dist_w(M_1, \ldots, M_k)$.
\end{definition}
As mentioned in the introduction, the problem of finding two edge-disjoint perfect matchings in a general graph is known to be NP-complete~\cite{Holyer:NP-completeness:1981,Fomin:Diverse:2020,Hanaka:Finding:2021}, while the problem of finding two maximum matchings $M_1$ and $M_2$ maximizing $|M_1 \triangle M_2|$ can be solved in polynomial time on bipartite graphs~\cite{Fomin:Diverse:2020}.
In this section, we design a polynomial-time algorithm for {\sc Diverse Bipartite Matchings} by applying \Cref{lem:w-packing}, which generalizes the aforementioned polynomial-time algorithm of \cite{Fomin:Diverse:2020}.

We construct a bipartite multigraph $G^*$ from $G$ by replacing each edge $e = \{a, b\} \in E$ with $k$ parallel edges $e_1, \ldots, e_k$.
We set $w^*(e_i) = w(e) \cdot (k - 2i + 1)$ for each $e \in E$ and $1 \le i \le k$.
By~\Cref{lem:w-packing}, it suffices to show that there is a polynomial-time algorithm that computes a maximum weight $k$-packing of $E^*$ with respect to $\mathcal M$, where $\mathcal M$ is the collection of matchings $M$ of $G$ with cardinality exactly $p$.
This problem can be solved in polynomial time by reducing to the minimum cost flow problem as follows.

Let $G^* = (A \cup B, E^*)$ be bipartite and let $w^*: E^* \to \mathbb Z$.
We construct a directed acyclic graph from $G^*$ by orienting each edge $\{a, b\}$ of $G^*$ from $a$ to $b$, where $a \in A$ and $b \in B$.
Each arc $(a, b)$ in $G^*$ has capacity one and cost $-w^*(\{a, b\})$.
We also add a source vertex $s$, a sink vertex $t$, and then arcs $(s, a)$ for $a \in A$ and $(b, t)$ for $b \in B$.
The arcs incident to the source or the sink have capacity $k$ and have cost zero.
Now, we set the flow requirement from $s$ to $t$ to $kp$.
Thanks to the integral theorem of the minimum cost flow problem, by \Cref{thm:mcf}, we can find in polynomial time a maximum weight subgraph $H^*$ of $G^*$ such that $H^*$ has exactly $kp$ edges and each vertex has degree at most $k$.
From this subgraph $H^*$, we need to construct a maximum weight $k$-packing of $E^*$ with respect to $\mathcal M$.
The following lemma ensures that it is always possible.

\begin{lemma}
    Let $H^*$ be a bipartite graph with $kp$ edges.
    Suppose the maximum degree of a vertex in $H^*$ is at most $k$.
    Then, the edges of $H^*$ can be partitioned into $k$ matchings of cardinality exactly $p$.
    Moreover, such a partition can be computed in polynomial time from $H^*$.
\end{lemma}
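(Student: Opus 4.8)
The plan is to first produce \emph{some} partition of $E(H^*)$ into $k$ matchings, and then rebalance them so that each has exactly $p$ edges. For the first step, note that $H^*$ is bipartite with maximum degree $\Delta(H^*)\le k$, so by K\"onig's edge-coloring theorem it admits a proper edge coloring with $k$ colors; moreover such a coloring can be computed in polynomial time (e.g.\ by iteratively extending a partial coloring along alternating/Kempe chains, or by any standard bipartite edge-coloring algorithm). Let $M_1,\dots,M_k$ be the resulting color classes. Each $M_i$ is a matching, they are pairwise disjoint, and $\sum_{i=1}^k |M_i| = |E(H^*)| = kp$.

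For the rebalancing step, suppose some class does not have size exactly $p$. Since the sizes sum to $kp$, there are indices $i,j$ with $|M_i| > p$ and $|M_j| < p$, in particular $|M_i| > |M_j|$. Consider the subgraph on edge set $M_i \cup M_j$: since it is the union of two matchings, each of its connected components is a simple path or an even cycle whose edges alternate between $M_i$ and $M_j$. In a cycle component the two classes contribute equally, and likewise in a path component with an even number of edges; in a path component with an odd number of edges, both end-edges belong to the same class, which therefore has exactly one more edge there. Summing over components, $|M_i| - |M_j|$ equals the number of odd path components with both end-edges in $M_i$ minus the number with both end-edges in $M_j$; as this is positive, there exists an odd path component $Q$ with both end-edges in $M_i$. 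Replacing $M_i$ by $M_i \triangle E(Q)$ and $M_j$ by $M_j \triangle E(Q)$ leaves both sets matchings, keeps $M_i\cup M_j$ unchanged as a set (so all other classes and disjointness are untouched), decreases $|M_i|$ by one, and increases $|M_j|$ by one.

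To bound the number of such swaps, I would track the potential $\Phi = \sum_{i=1}^k (|M_i|-p)^2$. A swap as above replaces two deviations $x = |M_i|-p \ge 1$ and $y = |M_j|-p \le -1$ by $x-1$ and $y+1$, changing $\Phi$ by $(x-1)^2+(y+1)^2-x^2-y^2 = -2(x-y)+2 \le -2$, since $x - y \ge 2$. Hence $\Phi$ strictly decreases by at least $2$ per swap, and as $\Phi \le (kp)^2$ initially, after $O((kp)^2)$ swaps we reach $\Phi = 0$, i.e.\ every $M_i$ has exactly $p$ edges. Each swap requires only finding one component of $M_i\cup M_j$, which takes linear time, so the whole procedure runs in polynomial time, proving the lemma.

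\textbf{Main obstacle.} The proper $k$-edge-coloring is classical and easy; the real content is forcing all color classes to have the \emph{same} cardinality $p$. The crux is the combinatorial fact that whenever $|M_i|>|M_j|$ the union $M_i\cup M_j$ contains an odd path component with both ends in $M_i$, which makes the alternating-path swap available; everything else (matching preservation, termination via the quadratic potential) is routine.
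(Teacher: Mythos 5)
Your proposal is correct and follows essentially the same route as the paper: a proper $k$-edge-coloring of the bipartite graph (K\"onig's theorem) followed by repeated alternating-path swaps between an oversized and an undersized color class until all classes have size exactly $p$. The only differences are cosmetic — you work with $M_i \cup M_j$ rather than $M_i \triangle M_j$ (identical here since the classes are disjoint) and you make the existence of the augmenting path and the polynomial termination bound explicit via a potential function, details the paper leaves implicit.
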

\begin{proof}
    Every bipartite graph of maximum degree at most $k$ has a proper edge-coloring with $k$ colors, and such an edge coloring can be computed in polynomial time~\cite{Cole:Edge:1982,Gabow:Algorithms:1982}.
    We can assume that each color is used at least once by recoloring an edge whose color is used at least twice.
    Then, the edges of $H^*$ can be decomposed into $k$ non-empty matchings $M_1, \ldots, M_k$.
    If $|M_1| = \cdots = |M_k| = p$, we are done.
    Suppose that there is a pair of matchings $M_i$ and $M_j$ with $|M_i| > p$ and $|M_j| < p$.
    Then $M_i \triangle M_j$ is a subgraph of $H^*$ of maximum degree at most two.
    As $|M_i| > |M_j|$, the subgraph contains an augmenting path $P = (v_1, \ldots, v_t)$ with $\{v_{\ell}, v_{\ell + 1}\} \in M_i$ for odd $\ell$ and $\{v_{\ell}, v_{\ell + 1}\} \in M_j$ for even $\ell$.
    Let $M'_i = (M_i \setminus E(P)) \cup (E(P) \cap M_j)$ and $M'_j = (M_j \setminus E(P)) \cup (E(P) \cap M_i)$.
    Then, we have matchings $M'_i$ and $M'_j$ such that $|M'_i| = |M_i| - 1$, $|M'_j| = |M_j| + 1$, and $M_i \cup M_j = M'_i \cup M'_j$.
    By repeating this argument, we have a desired set of matchings.
\end{proof}

Therefore, $E(H^*) = M_1 \cup \cdots \cup M_k$ is a maximum weight $k$-packing of $E(H^*)$ with respect to $\mathcal M$, which implies the following theorem.

\begin{theorem}\label{thm:dbm}
    {\sc Diverse Bipartite Maximum Matching} can be solved in polynomial time.
\end{theorem}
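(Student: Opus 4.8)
The plan is to assemble the pieces laid out just above into one chain of reductions. First I would invoke \Cref{lem:w-packing}: finding $M_1,\dots,M_k \in \mathcal M$ maximizing $\dist_w(M_1,\dots,M_k)$ is equivalent to finding a maximum-weight $k$-packing $S^*$ of $E^*$ with respect to $\mathcal M$ under the weight function $w^*(e_i) = w(e)\cdot(k-2i+1)$ on the $k$ parallel copies $e_1,\dots,e_k$ of each edge $e$ of $G$. So it suffices to solve, in polynomial time, the problem of computing a maximum-weight $k$-packing of $E^*$ with respect to $\mathcal M$.

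Second, I would model this $k$-packing problem as a minimum-cost flow instance exactly as described: orient every edge of $G^*$ from its $A$-endpoint to its $B$-endpoint with capacity $1$ and cost $-w^*$, add a super-source $s$ with an arc to every vertex of $A$ and a super-sink $t$ with an arc from every vertex of $B$, give these new arcs capacity $k$ and cost $0$, and require a flow of value $kp$. By \Cref{thm:mcf} a minimum-cost flow is computable in polynomial time, and since every capacity is an integer there is an integral optimum. An integral flow of value $kp$ in this network is precisely the indicator of a subgraph $H^*$ of $G^*$ with exactly $kp$ edges in which every vertex has degree at most $k$, and its cost equals $-w^*(E(H^*))$; hence a minimum-cost flow yields a maximum-weight such subgraph $H^*$.

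Third, I would apply the decomposition lemma proved just above: a bipartite graph with $kp$ edges and maximum degree at most $k$ edge-decomposes, in polynomial time, into $k$ matchings each of cardinality exactly $p$. Applying this to $H^*$ gives $E(H^*) = M_1 \cup \dots \cup M_k$, which is therefore a $k$-packing of $E^*$ with respect to $\mathcal M$; since $H^*$ has maximum weight among all eligible subgraphs, this $k$-packing is of maximum weight, and translating back through \Cref{lem:w-packing} produces $k$ matchings of size exactly $p$ maximizing $\dist_w$. A running-time bound then follows by substituting $|V(G^*)| = |A|+|B|+2$ and $|E(G^*)| = k|E|+|A|+|B|$ into \Cref{thm:mcf} and adding the polynomial cost of the bipartite edge-coloring.

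The step I expect to be the real obstacle is the third one. A priori the min-cost flow only guarantees a subgraph of maximum degree at most $k$, and an arbitrary proper $k$-edge-coloring splits $E(H^*)$ into $k$ matchings whose sizes need not all equal $p$. One must argue that these classes can be rebalanced, by pushing along an alternating/augmenting path inside the symmetric difference of an oversized and an undersized color class (a union of vertex-disjoint paths and even cycles of maximum degree two), without altering the union of edges, until every class has size exactly $p$. Checking that this rebalancing terminates and is always feasible given the "exactly $kp$ edges, degree at most $k$" structure is the crux; everything else is a routine chaining of the earlier results.
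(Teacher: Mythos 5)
Your proposal is correct and follows essentially the same route as the paper: the reduction via \Cref{lem:w-packing} to a maximum-weight $k$-packing, the minimum-cost flow construction with capacities $1$ and $k$ and flow value $kp$, and then the decomposition of the resulting degree-at-most-$k$ subgraph $H^*$ with $kp$ edges into $k$ matchings of size exactly $p$. The ``crux'' you flag -- rebalancing the color classes of a proper $k$-edge-coloring by exchanging along an alternating path in the symmetric difference of an oversized and an undersized class -- is precisely how the paper's preceding lemma handles it, so no genuinely different argument is involved.
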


\section{Computational Experiments}
In order to assess the practical performance of our algorithm for {\sc Diverse Shortest $st$-Paths}, a computational experiment was conducted on a computer equipped with Intel(R) Xeon(R) Gold 5122 processor (3.60GHz) and 92GB RAM. Our diverse shortest $st$-paths algorithm\footnote{The code is available at \url{https://github.com/Dotolation/diverse-graph-algo}.} was implemented using Java with JGraphT library~\cite{jgrapht}\footnote{\url{https://jgrapht.org/}}. We compared our algorithm with Eppstein's $k$-shortest paths algorithm~\citeyearpar{Eppstein:Finding:1998}, which is included in JGraphT by default. 

\paragraph{Datasets:}
We used one type of synthetic graphs (Grid) and two types of real-world graphs (SNAP and DIMACS).

\begin{itemize}
    \item {\bf Grid}: We generated $p \times p$-grid graphs with vertex set $V = \{1, 2, \ldots, p\} \times \{1, 2, \ldots, p\}$ and edge set $E = \{\{(i, j), (i', j')\} : |i-i'| + |j-j'| = 1\}$ for $p \in \{40, 50, \ldots, 140\}$.
    We set $s$ and $t$ to the ``top-left'' corner vertex $(1, 1)$ and the ``bottom-right'' corner vertex $(p, p)$, respectively.
    Each edge has a unit weight and hence every edge belongs to a shortest $st$-path.
    
    
    \item {\bf SNAP}: We selected three directed unweighted graphs from Stanford Large Network Dataset Collection~\footnote{\url{http://snap.stanford.edu/index.html}}: wiki-Vote ($|V|=7,115$, $|E|=103,689$), soc-Slashdot0922 ($|V|=82,168$, $|E|=870,161$), and ego-Twitter ($|V|=81306$, $|E|=1,768,149$).
    These graphs are directed and unweighted. 
    As the diameters (i.e., the maximum distance between any pair of vertices) of these graphs are relatively small, the shortest path distance between $s$ and $t$ is indeed small (see {\bf Results}).
    
    \item {\bf DIMACS}: We selected two directed edge-weighted graphs from the 9th DIMACS Implementation Challenge -- Shortest Paths\footnote{\url{https://www.diag.uniroma1.it/challenge9/}}: NY ($|V| = 264,346$, $|E| = 733,846$) and FLA ($|V| = 1,070,376$, $|E| = 2,712,798$). Each edge weight of DIMACS graphs were rounded to the nearest 100 (any value below 100 was set to 100) for rudimentary data smoothing, which may increase the number of shortest $st$-paths.
\end{itemize}

\paragraph{Method:}
In principle, the source-sink vertex pair used in each test instance was selected randomly (seed $=2021$). However, given an $st$-pair, if the number of the shortest $st$ paths were less than $3k$, the pair was excluded from our experiment, since the point of our discussion is to evaluate the ability to pick $k$ solutions in a diverse manner from the much larger solution set. Likewise, we also excluded $st$ pairs whose average shortest path length (unweighted) falls short of 3 as it is difficult to assess the diversity of such short paths. 
Using the randomly selected $n$ vertex pairs with $n = 400$ for each instance, our algorithm and the $k$-best algorithm were executed.
We evaluate \emph{processing time} (in milliseconds) and the \emph{diversity measure} (the pair-wise Hamming distance of $k$ shortest paths) for these algorithms.

A different method is used when testing Grid graphs. Each Grid graph created with $p \in \{40,50,\ldots,140\}$ was tested once using the source vertex $(1,1)$ and the sink vertex $(p,p)$. 
The processing time and diversity were gauged at three distinct values of $k$ ($k \in \{10,50,100\}$).  

\paragraph{Results:}
For each real-world graph, $n = 400$ random $st$-pairs were sampled to test both algorithms; the number $k$ of solutions was fixed to $10$.  

\begin{table}[h!]
\centering
\caption{Average $st$-path length and count ($n = 400$, $k=10$).}\label{tbl:length}
 \begin{tabular}{||c||r|r||} 
 \hline
 Name & Length & Count\\ [0.5ex] 
 \hline\hline
 SNAP (Wiki-Vote) & 4.28 & 58.85 \\ 
 \hline
 SNAP (soc-Slashdot0922) & 4.59 & 100.14 \\
 \hline
 SNAP (ego-Twitter) & 5.40 & 136.56 \\
 \hline
 \hline
 DIMACS (NY) & 523.41 & 1307.32 \\
 \hline
 DIMACS (FLA) & 1496.00 &  2205.34\\ 
 \hline
 \end{tabular}
\end{table}

\begin{table}[h!]
\centering
\caption{Average processing time (ms) ($n=400$, $k=10$).}\label{tbl:runtime}
 \begin{tabular}{||c||r|r||} 
 \hline
 Name & Ours & $k$-Best\\ [0.5ex] 
 \hline\hline
 SNAP (Wiki-Vote) & 20.65 & 17.24 \\ 
 \hline
 SNAP (soc-Slashdot0922) & 603.95 & 583.30 \\
 \hline
 SNAP (ego-Twitter) & 1027.89 & 1021.89 \\
 \hline
 \hline
 DIMACS (NY) & 1244.96 & 1223.76 \\
 \hline
 DIMACS (FLA) & 6319.66  & 6287.20 \\ 
 \hline
 \end{tabular}
\end{table}

The SNAP dataset consists of social networks characterized by high vertex-to-edge ratios and low diameters. Unsurprisingly, their average (unweighted) shortest $st$-path length was significantly low as \Cref{tbl:length} demonstrates. Note that our experiment excluded the $st$-pairs whose shortest path distance is below 3.
This is contrasted with DIMACS graphs having very high average path length, which is at least 100 times greater than those of SNAP. DIMACS graphs also have significantly higher shortest path counts (i.e., the number of shortest $st$-paths) as well.  

As stated on \Cref{tbl:runtime}, the $k$-best enumeration approach was marginally superior to our diverse approach in terms of the processing time.
However, the difference might be small enough to be ignored in practical situations, and this observation held true for both SNAP and DIMACS datasets. 
Let us note that the theoretical running time bounds are quite different: our algorithm runs in time $O(k^2|E|^2\log^2 |V|)$ (\Cref{thm:dsp}), while the $k$-shortest paths algorithm runs in time $O(|E| + |V|(k + \log |V|))$ \cite{Eppstein:Finding:1998}.
Regardless the difference between the computational time, both algorithms seems to run reasonably fast on moderately large real-life graphs, such as DIMACS (FLA) that has over a million vertices and two million edges.  

\begin{table}[h!]
\centering
\caption{Average diversity of solutions ($n=400$, $k=10$).}\label{tbl:diversity}
 \begin{tabular}{||c||r|r||} 
 \hline
 Name & Ours & $k$-Best\\ [0.1ex] 
 \hline\hline
 Wiki-Vote & \num{3.198e+2} & \num{2.891e+2}\\ 
 \hline
 soc-Slashdot0922 & \num{3.133e+2} & \num{2.791e+2}\\
 \hline
 ego-Twitter & \num{4.150e+2}& \num{3.415e+2}\\
 \hline
 \hline
 NY & \num{2.280e+6} & \num{1.059e+6}\\
 \hline
 FLA & \num{3.610e+6} & \num{1.406e+6}\\ 
 \hline
 \end{tabular}
\end{table}

On the other hand, our algorithm consistently demonstrated greater diversity than the $k$-best algorithm (\Cref{tbl:diversity}). 
The difference in the diversity measure was especially pronounced in the DIMACS dataset (in particular DIMACS (FLA)), which has noticeably longer average shortest path length. In case of the SNAP graphs, the difference in the diversity measure was observable, but significantly less dramatic. In essence, the effectiveness of our algorithm seems to correlate with the average length and the total number of shortest $st$-paths, which is in line with our prediction.

\begin{figure*}[h]
\begin{minipage}{.5\textwidth}
\begin{center}
\begin{tikzpicture}[scale=0.75]
\begin{axis}[
    xlabel={Grid dimension ($p$)},
    ylabel={Processing time (ms)},
    xmin=40, xmax=140,
    ymin=10, ymax=1000000000,
    ymode=log,
    xtick={60,80,100,120,140},
    ytick={10,100,1000,10000,100000,1000000,10000000,100000000, 1000000000, 1000000000},
    legend pos=north west,
    legend columns=2,
    ymajorgrids=true,
    grid style=dashed,
]

\addplot[
    color=blue,
    mark=square,
    ]
    coordinates {
    (40,417)(50,855)(60,1711)(70,2672)(80,3974)(90, 6914)(100,10078)(110,15426)(120,28269)(130,28269)(140,35659)
    };
    
\addplot[
    color=orange,
    mark=triangle,
    ]
    coordinates {
    (40,13)(50,22)(60,187)(70,329)(80,283)(90, 665)(100,8993)(110,14014)(120,6422)(130,96249)(140,159480)
    };
    
\addplot[
    color=purple,
    mark=square,
    ]
    coordinates {
    (40,5747)(50,14454)(60,27981)(70,44892)(80,69488)(90, 113553)(100,160743)(110,261756)(120,327085)(130,384838)(140,429891)
    };

\addplot[
    color=green,
    mark=triangle,
    ]
    coordinates {
    (40,13)(50,21)(60,186)(70,342)(80,288)(90, 639)(100,8965)(110,13999)(120,6360)(130,95845)(140,159496)
    };
    
\addplot[
    color=black,
    mark=square,
    ]
    coordinates {
    (40,24154)(50,52317)(60,108028)(70,185222)(80,254522)(90,382724)(100,624933)(110,751876)(120,968132)(130,1240217)(140,1583424)
    };
    
\addplot[
    color=brown,
    mark=triangle,
    ]
    coordinates {
    (40,13)(50,21)(60,173)(70,332)(80,287)(90, 663)(100,8982)(110,13939)(120,6529)(130,91914)(140,162124)
    };

\legend{Ours ($k=10$)), $k$-best ($k=10$), Ours ($k=50$), $k$-best ($k=50$), Ours ($k=100$), $k$-best ($k=100$)}
\end{axis}
\end{tikzpicture}
\caption{Grid dimension vs. processing time}\label{fig:runtimeG}
\end{center}
\end{minipage}
\hfill
\begin{minipage}{.5\textwidth}
\begin{center}
\begin{tikzpicture}[scale=0.75]
\begin{axis}[
    xlabel={Grid dimension ($p$)},
    ylabel={diversity measure},
    xmin=40, xmax=140,
    ymin=100, ymax=1000000000,
    ymode=log,
    xtick={60,80,100,120,140},
    ytick={10,100,1000,10000,100000,1000000,10000000,100000000,1000000000,1000000000},
    legend pos=north west,
    legend columns=2,
    ymajorgrids=true,
    grid style=dashed,
]

\addplot[
    color=blue,
    mark=square,
    ]
    coordinates {
    (40,6876)(50,8676)(60,10476)(70,12276)(80,14076)(90, 15876)(100,17676)(110,19476)(120,21276)(130,23076)(140,24876)
    };
    
\addplot[
    color=orange,
    mark=triangle,
    ]
    coordinates {
    (40,542)(50,686)(60,662)(70,686)(80,662)(90, 430)(100,686)(110,662)(120,662)(130,662)(140,430)
    };
    
\addplot[
    color=purple,
    mark=square,
    ]
    coordinates {
    (40,182652)(50,231652)(60,280652)(70,329652)(80,378652)(90, 427652)(100,476652)(110,525652)(120,574652)(130,623652)(140,672652)
    };

\addplot[
    color=green,
    mark=triangle,
    ]
    coordinates {
    (40,18084)(50,20376)(60,19520)(70,20376)(80,19520)(90, 18546)(100,20376)(110,19520)(120,19520)(130,19520)(140,18546)
    };
    
\addplot[
    color=black,
    mark=square,
    ]
    coordinates {
    (40,731832)(50,928688)(60,1126008)(70,1323792)(80,1521792)(90,1719792)(100,1917792)(110,2115792)(120,2313792)(130,2511792)(140,2709792)
    };
    
\addplot[
    color=brown,
    mark=triangle,
    ]
    coordinates {
    (40,81838)(50,78882)(60,86686)(70,78882)(80,86686)(90,83092)(100,78882)(110,86686)(120,86686)(130,86686)(140,83092)
    };

\legend{Ours ($k=10$)), $k$-best ($k=10$), Ours ($k=50$), $k$-best ($k=50$), Ours ($k=100$), $k$-best ($k=100$)}
\end{axis}
\end{tikzpicture}
\caption{Grid dimension vs. diversity measure}\label{fig:diversityG}
\end{center}
\end{minipage}
\end{figure*}

\Cref{fig:runtimeG} 
shows the running time of our algorithm and the $k$-best algorithm on Grid graphs with varying values of $k$ across grid dimensions $p$.\footnote{It might be hard to distinguish the lines for the $k$-best algorithm in \Cref{fig:runtimeG} as their running times are almost the same for $k = 10, 50, 100$.}
Our diverse algorithm was considerably slower than the $k$-best algorithm, regardless the value of $k$.
The growth of $k$ resulted in a significant slowdown of the diverse algorithm, whereas it had negligible impact on the processing time of the $k$-best algorithm.
As every single edge in a grid graph is included in a shortest path from $s$ to $t$, no edge or vertex is discarded in the preprocessing step. (This phenomenon is unlikely in most real-world graphs.) This increases the computational burden of our algorithm on the minimum cost flow step, which explains the experimental outcome. 

However, our algorithm vastly outperformed the $k$-best approach in regard to diversity maximization for all values of $k$ (\Cref{fig:diversityG}). As the diversity is measured by using edge-wise Hamming distance, the growth of $k$ has a strong impact on the diversity measure by the virtue of adding more edges to compare. Despite this, the diverse algorithm with $k=50$ has outperformed the $k$-best counterpart with $k=100$ for every grid dimension with $p \ge 40$.

In summary, our experiment suggests that our diverse algorithm gives significantly more diverse solutions compared to the conventional $k$-best enumeration algorithm. Although its slow speed on synthetic grid graphs may raise a question of practicality, it should be emphasized that our algorithm is efficient on large real-life data with over 2 million edges, with the processing time on par with the JGraphT implementation of $k$-best algorithm.

\section{Concluding Remarks}
In this paper, we study the problem of finding $k$ shortest $st$-paths $P_1, P_2, \ldots, P_k$ in edge-weighted directed graphs maximizing $\dist_w(E(P_1), E(P_2), \ldots, E(P_k))$, namely {\sc Diverse Shortest $st$-Paths}.
We show that this problem is polynomial-time solvable, while a slightly ``relaxed'' version {\sc Diverse Short $st$-Paths} is NP-hard.
The positive result is shown by reducing to a maximum weight packing problem, which also proves the tractability of several problems, such as {\sc Weighted Diverse Matroid Bases}, {\sc Weighted Diverse Arborescences}, and {\sc Diverse Bipartite Matchings}.
Our result of {\sc Diverse Bipartite Matchings} extends the result of \cite{Fomin:Diverse:2020}, in which they gave a polynomial-time algorithm for finding two maximum bipartite matchings $M_1, M_2$ maximizing $\dist(M_1, M_2)$, to general $k \ge 2$.
We also conducted an experiment to assess the practical performance of our algorithm for {\sc Diverse Shortest $st$-Paths}.
Our experiment shows that the practical running time of our algorithm is comparable with that of the known $k$-best enumeration algorithm~\cite{Eppstein:Finding:1998} for real-world graph.
For synthetic graphs, we used grid graphs, in which every edge belongs to a shortest path between the top-left corner and the bottom-right corner.
Our algorithm works slower than the $k$-best enumeration algorithm on grid graphs.
However, the running time of our algorithm is still reasonable for moderate-sized instances, while the diversity of paths is significantly larger than that computed by the $k$-best enumeration algorithm.

There are several open questions related to our results.
We gave a polynomial-time algorithm for {\sc Diverse Shortest $st$-Paths}.
However, it might be more acceptable in some applications to use $\dist_{\min}$ as its diversity measure.
It would be interesting to know the computational complexity of the problem of finding $k$ shortest $st$-paths $P_1, P_2, \ldots, P_k$ maximizing $\dist_{\min}(E(P_1), E(P_2), \ldots, E(P_k))$.
Contrary to this tractability, we show that {\sc Diverse Short $st$-Paths} is NP-hard.
In our proof, we use a relatively large threshold $\theta = |V|$ of the length of $st$-paths.
Another interesting question is whether {\sc Diverse Short $st$-Paths} is still hard even if $\theta$ is close to the shortest $st$-path distance, which might be a reasonable restriction in practice.
{\sc Weighted Diverse Arborescences} and {\sc Diverse Bipartite Matchings} are special cases of the following problem: Given two matroids $\mathcal M_1 = (E, \mathcal I_1)$ and $\mathcal M_2 = (E, \mathcal I_2)$ of the same ground set $E$ and an integer $k$, we are asked to find $k$ common bases $B_1, B_2, \ldots, B_k$ of $\mathcal M_1$ and $\mathcal M_2$ maximizing $\dist_w(B_1, B_2, \ldots, B_k)$.
Here, a subset $B \subseteq E$ is a common base of $\mathcal M_1$ and $\mathcal M_2$ if $B$ is a base of both matroids.
This problem is at least as hard as the problem of packing $k$ common bases of two matroids, implying that this problem cannot be solved with a polynomial number of independence oracle queries~\cite{Berczi:Complexity:2021}.
This raises the following question: Is there another natural problem that is a special case of finding diverse common bases and solvable in polynomial time?

\section*{Acknowledgments} 
The authors thank anonymous referees for valuable comments.
This work is partially supported by JSPS Kakenhi Grant Numbers
JP18H04091, 
JP18K11168, 
JP18K11169, 
JP19H01133, 
JP20H05793, 
JP20K19742, 
JP21H03499, 
JP21K11752, 
JP21K17707 
JP21H05852 
JP21K17812, 
and JST CREST Grant Number JPMJCR18K3. 

\bibliography{ref}

\end{document}